\newcommand{\Oof}{\mathcal{O}}
\newcommand{\Cc}{\mathscr{C}}
\newcommand{\N}{\mathbb{N}}
\renewcommand{\phi}{\varphi}
\renewcommand{\epsilon}{\varepsilon}
\newcommand{\tw}{\mathrm{tw}}
\newcommand{\TD}{\textsc{Treedepth}\xspace}
\newcommand{\td}{\mathrm{td}}
\newcommand{\WTD}{\textsc{Weighted Treedepth}\xspace}
\newcommand{\wtd}{\mathrm{wtd}}
\newcommand{\VC}{\textsc{Vertex Cover}\xspace}
\newcommand{\claimqed}{\hfill\ensuremath{\lhd}}
\@nx\else[{#1}]\fi}
\@nx\else[{#1}]\fi\else\csname #2\@xa\endcsname\fi}
\newtheorem{observation}[definition]{Observation}
\newenvironment{claimproof}
  {\trivlist\PRstyle\item[]{\bfseries Proof of the claim:}\newline}{\claimqed\endtrivlist}
\tikzset{vertex/.style={circle, color=black, draw=black, align=center}}
\begin{document}

\title{Weighted Treedepth is NP-complete on Graphs of Bounded Degree}

\address{University of Bremen, Bibliothekstraße 1, 28359 Bremen, Germany}

\author{Jona Dirks\\
University Clermont Auvergne, France\\
jona.dirks@uca.fr
\and Nicole Schirrmacher\\
University of Bremen, Germany\\
schirrmacher@uni-bremen.de
\and Sebastian Siebertz\\
University of Bremen, Germany\\
siebertz@uni-bremen.de
\and Alexandre Vigny\\
University Clermont Auvergne, France\\
alexandre.vigny@uca.fr} 

\maketitle

\runninghead{J. Dirks, N. Schirrmacher, S. Siebertz, A. Vigny}{Weighted Treedepth is NP-complete on Graphs of Bounded Degree}

\begin{abstract}
	A \emph{treedepth decomposition} of an undirected graph $G$ is a rooted forest~$F$ on the vertex set of~$G$ such that every edge $uv\in E(G)$ is in ancestor-descendant relationship in $F$. 
	Given a weight function $w\colon V(G)\rightarrow \N$, the \emph{weighted depth} of a treedepth decomposition is the maximum weight of any path from the root to a leaf, where the weight of a path is the sum of the weights of its vertices.
	It is known that deciding weighted treedepth is NP-complete even on trees.
	We prove that weighted treedepth is also NP-complete on bounded degree graphs.
	On the positive side, we prove that the problem is efficiently solvable on paths and on 1-subdivided stars.
\end{abstract}

\begin{keywords}
Treedepth, NP-hardness, Structural graph theory
\end{keywords}

\section{Introduction}

The structural parameters \emph{treewidth} and \emph{treedepth} lie at the heart of modern algorithmic and structural graph theory and play a pivotal role in both the theoretical understanding and the practical solution of hard computational~problems.

Treewidth, introduced independently in~\cite{bertele1972nonserial,halin1976s,robertson1984graph}, intuitively measures how close a graph is to being a tree. 
It underlies the celebrated graph minor theory, and many otherwise intractable problems are fixed-parameter tractable (FPT) when parameterized by treewidth, see~\cite{cygan2015parameterized}. In particular, Courcelle's theorem establishes that every graph property definable in monadic second-order logic~(MSO) can be decided in linear time on graphs of bounded treewidth, yielding a vast algorithmic meta-theory~\cite{courcelle1990monadic}.

In contrast, treedepth, introduced independently in~\cite{bodlaender1998rankings,katchalski1995ordered,nesetril2014sparsity,pothen1988complexity}, is a more restrictive measure, related to the height of forests embedding a graph's edges. 
It~captures not only tree-likeness but also nesting depth and hierarchical structure, making it particularly suited to problems involving recursive or hierarchical processes.
In graphs of bounded treewidth, many problems admit dynamic programming algorithms based on tree decompositions, whereas on graphs of bounded treedepth, recursive branching algorithms are often preferable and lead to more space-efficient solutions due to the shallow, hierarchical structure of treedepth decompositions, see e.g.~\cite{dvovrak2018parameterized,furer2017space,HegerfeldK20,nederlof2023hamiltonian,pilipczuk2021polynomial,pilipczuk2018space}. 
Furthermore, treedepth plays a crucial rule in the sparsity theory of Ne\v{s}et\v{r}il and Ossona de Mendez~\cite{nesetril2014sparsity}. 

Computing treewidth is NP-complete~\cite{arnborg1987complexity}, even on cubic graphs~\cite{BodlaenderBJKLM23} (hardness on graphs with maximum degree $9$ was established earlier in~\cite{BodlaenderT97}). 
We refer to the recent paper~\cite{bonnet2025treewidth} for a detailed overview of the exact, parameterized and approximation complexity of computing treewidth. 
It is one of the main open problems in the area whether treewidth can be computed in polynomial time on planar graphs. 
In contrast, the tightly related measure branchwidth can be computed efficiently on planar graphs~\cite{seymour1994call}.

Computing treedepth is also NP-complete~\cite{bodlaender1998rankings,pothen1988complexity}, even on chordal graphs~\cite{DereniowskiN06}. 
The problem is known to be polynomial-time solvable on permutation graphs, circular permutation graphs, interval graphs, circular-arc graphs, trapezoid graphs and on cocomparability graphs of
bounded dimension~\cite{deogun1994vertex}. 
The fastest known parameterized algorithms run in time $2^{\Oof(\td\cdot \tw)}\cdot n$, where $\td$ denotes the treedepth and $\tw$ denotes the treewidth of the input graph, when a tree decomposition is given with the input, and in time $2^{\Oof(\td^2)}\cdot n$, respectively, when no tree decomposition is given with the input~\cite{reidl2014faster}.
Recently, it was shown how to compute treedepth in time $2^{\Oof(\td^2)}\cdot n$ and $n^{\Oof(1)}$ space~\cite{NadaraPS22}, improving the algorithm of~\cite{reidl2014faster}, which uses exponential space.
It is an open question whether treedepth can be computed in polynomial time on graphs with bounded degree or on planar graphs, and it was recently conjectured that this is not the case~\cite{SchirrmacherSV25}.

In this work, we consider a weighted variant of treedepth, which, to the best of our knowledge was introduced in~\cite{DereniowskiN06}.
Weighted treedepth was independently rediscovered in~\cite{FominG0021} and recently found further applications in~\cite{jaffke2025parameterized}. 
A \emph{treedepth decomposition} of a graph $G$ is a rooted forest~$F$ on the vertex set of $G$ such that every edge $uv\in E(G)$ is in ancestor-descendant relationship in $F$.
We further assume that graphs are equipped with a weight function $w\colon V(G)\rightarrow \N$. 
The \emph{depth} of a treedepth decomposition is the number of vertices of a longest path from the root to a leaf.
The \emph{weighted depth} of a treedepth decomposition is the maximum weight of any path from the root to a leaf, where the weight of a path is the sum of the weights of its vertices.
It was shown in~\cite{DereniowskiN06} that deciding weighted treedepth is NP-complete even on trees. 

While it is open whether we can efficiently decide unweighted treedepth on planar graphs, the hardness result for weighted treedepth already on trees~\cite{DereniowskiN06} implies that, in particular, the weighted treedepth problem is hard on planar graphs.
The main technical result of our paper is to show that the problem is NP-hard on graphs with bounded degree, more precisely, on graphs with maximum degree~10. 
Note that containment of the problem in NP is trivial, since a treedepth decomposition is a polynomial-sized certificate. 
Our reduction is a non-trivial reduction from vertex cover and answers one of the open questions for treedepth in the weighted case negatively. 

On the positive side, we show that the problem is polynomial time solvable on paths, as well as on $1$-subdivided stars. 
The algorithms that we present for these severely restricted graph classes are, from a technical standpoint, not particularly difficult.
However, in light of the existing hardness results for the problem on trees, arguably one of the simplest nontrivial graph classes, and the current lack of understanding regarding the precise structural threshold at which the problem becomes computationally hard, our results represent the strongest algorithmic results that we have at this stage.

This leaves as an open question whether the problem is efficiently solvable on $2$-subdivided stars, and more generally, at what depth of trees the problem becomes NP-hard, and whether the problem is fixed-parameter tractable when parameterized by unweighted treedepth. 
It is an easy observation (see \cref{thm:fpt}) that it is fixed-parameter tractable when parameterized by the weighted treedepth itself.
A final question that remains open is whether weighted treedepth is NP-hard on graphs of degree smaller than 10, in particular, on cubic graphs.

\medskip\noindent\textbf{Organization.}
The paper is organized as follows.
After recalling the necessary notation in~\Cref{sec:prelim}, we provide algorithms to compute the \WTD in time FPT when parameterized by the \WTD itself and in polynomial time for paths and 1-subdivided stars in~\cref{sec:poly-cases}.
Our main contribution, the hardness result for \WTD on graphs of maximum degree~10, is presented in~\cref{sec:hardness}.

\section{Preliminaries}
\label{sec:prelim}

\noindent\textbf{Graphs.}
We consider finite and undirected graphs $G$ equipped with a weight function ${w\colon V(G)\to\mathbb{N}}$ where $\N$ is the set of positive integers (not including $0$).
A \emph{path} $P$ from a vertex~$v_0$ to a vertex $v_k$
is a sequence $(v_0,\ldots,v_k)$ of pairwise distinct vertices $v_0,\ldots,v_k\in V(G)$ such that $v_{i-1}v_{i}\in E(G)$ for $1\leq i\leq k$. The \emph{length} of the path $P$ is the number of edges $k$.
Two vertices~$u$ and~$v$ are \emph{connected} if there exists a path from $u$ to $v$.
A graph is \emph{connected} if every two of its vertices are connected.
A \emph{cycle} is a path in which the endpoints $v_0$ and $v_k$ are adjacent.
A \emph{vertex cover} is a vertex set $C\subseteq V(G)$ such that for every edge $uv\in E(G)$, we have $u\in C$ or $v\in C$.
The \emph{degree} of a vertex $v$ is the number of vertices adjacent to $v$.

\medskip\noindent\textbf{Trees and forests.}
A \emph{forest} $F$ is a graph that contains no cycles and a \emph{tree}~$T$ is a connected forest.
A \emph{rooted forest} $F$ is a forest in which each connected component (which is a tree) $T$ has a designated root vertex $r(T)$.
A \emph{leaf} of a forest is a vertex of degree 1.
The \emph{depth} of a tree is the length of a longest path from its root to a leaf and the depth of a forest is the maximum depth over its connected components.

A \emph{forest order} on a set $V$ is a partial order $\leq$ on $V$ such that for every $x \in V$, the set of elements smaller than $x$, $\{\, y \in V \mid y < x \,\}$, is linearly ordered by $\leq$. 
A \emph{tree order} is a forest order with a unique minimal element, called the \emph{root} of the order. 
Given a rooted forest~$F$, then for every connected component~$T$ of~$F$ with root~$r(T)$, we call a vertex $u\in V(T)$ an \emph{ancestor} of a vertex $v\in V(T)$ if $u$ lies on the unique path from $v$ to the root~$r(T)$, and write $u\leq_Fv$. The order $\leq_F$ is the \emph{forest order of $F$}. If $F$ is a tree, then $\leq_F$ is a tree order.
Note that vice versa, every forest order $\leq$ defines a unique forest $F$ such that the forest order of $F$ is equal to $\leq_F$. 
The {\em least common ancestor} of two vertices $u$ and $v$ is the $\le_F$-maximal vertex $x$ such that $x\le_F u$ and $x\le_F v$. If $u$ and $v$ are not in the same connected component of $F$, then the least common ancestor is not defined.
A \emph{star} $S_n$ is a tree consisting of a central root vertex adjacent to $n$ leaves.
A star where each edge is replaced by a path of length $n+1$ is called an \emph{$n$-subdivided star}.

\medskip\noindent\textbf{Treedepth decompositions.}
A \emph{treedepth decomposition} of a graph $G$ is a rooted forest~$F$ with $V(F)=V(G)$ such that for every edge $uv\in E(G)$, we have $u\leq_F v$ or $v\leq_F u$. 

\begin{observation}\label[observation]{obs:conn-decomposition}
  Let $G$ be a graph and let $H$ be a connected subgraph of $G$.
  Let~$T$ be a treedepth decomposition of $G$.
  Then there exists an element $x\in V(H)$ such that $x\leq_T y$ for all $y\in V(H)$, that is, $H$ has a unique $\leq_T$-minimal element~$x$.
\end{observation}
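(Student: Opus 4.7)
\smallskip\noindent\textbf{Proof plan.}
The plan is to exhibit such a minimal element by first taking \emph{any} $\leq_T$\nobreakdash-minimal vertex $x$ of $V(H)$ (such a vertex exists since $V(H)$ is finite and nonempty), and then showing, using the connectedness of $H$, that $x$ is comparable to every other vertex of $V(H)$ and in fact sits below all of them. Uniqueness will then follow by antisymmetry of $\leq_T$.

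First I would fix any $\leq_T$-minimal element $x\in V(H)$ and any target vertex $y\in V(H)$. Since $H$ is connected, there is a path $(v_0,v_1,\dots,v_k)$ in $H$ with $v_0=x$ and $v_k=y$. I would then prove by induction on $i\in\{0,\dots,k\}$ that $x\leq_T v_i$. The base case $i=0$ is trivial. For the inductive step, assume $x\leq_T v_{i-1}$. Because $v_{i-1}v_i\in E(H)\subseteq E(G)$ and $T$ is a treedepth decomposition of $G$, the vertices $v_{i-1}$ and $v_i$ are $\leq_T$-comparable. If $v_{i-1}\leq_T v_i$ the conclusion follows by transitivity. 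Otherwise $v_i\leq_T v_{i-1}$, so both $x$ and $v_i$ are ancestors of $v_{i-1}$ in $T$; since the set of ancestors of any fixed vertex is linearly ordered by $\leq_T$ (by the defining property of forest orders recalled in \cref{sec:prelim}), either $x\leq_T v_i$ or $v_i<_T x$. The latter is incompatible with the $\leq_T$-minimality of $x$ in $V(H)$ (as $v_i\in V(H)$), so $x\leq_T v_i$ in either subcase.

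Having established that $x\leq_T y$ for every $y\in V(H)$, I would then address uniqueness: if $x'\in V(H)$ is another vertex with $x'\leq_T y$ for all $y\in V(H)$, then in particular $x'\leq_T x$ and $x\leq_T x'$, hence $x=x'$ by antisymmetry.

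I do not expect a real obstacle here; the only subtlety is making the inductive step clean by invoking the linear order on ancestors rather than arguing ad hoc about which of $x$ and $v_i$ is higher in $T$. This is precisely what makes the forest-order formulation from \cref{sec:prelim} convenient.
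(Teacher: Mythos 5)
Your proof is correct and is exactly the standard argument this observation relies on; the paper states it without proof, so there is nothing to diverge from. Picking an arbitrary $\leq_T$-minimal vertex of $V(H)$, walking along a path, and using that the ancestors of any vertex are linearly ordered (the defining property of a forest order) is precisely the intended reasoning.
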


The \emph{weighted depth} of a treedepth decomposition is the maximum weight of any path from a root to a leaf (of the same tree) where the weight of a path is the sum of the weights of its vertices.

\begin{definition}
  The \emph{(weighted) treedepth} of a graph $G$, denoted $\wtd(G)$, is the minimum (weighted) depth over all treedepth decompositions of the graph $G$.
\end{definition}

\begin{observation}\label[observation]{obs:induced-subgraph}
  Consider a graph $G$, and a connected subgraph of $H\subseteq G$.
  Suppose $T$ is a treedepth decomposition of $G$, and let $h$ be the $\leq_T$-minimum element of $H$, which exists by \cref{obs:conn-decomposition}.
  Denote by $T_h$ the subtree of $T$ rooted at $h$.
  Then, the weighted depth of $T_h$ is at least the weighted treedepth of $H$.
\end{observation}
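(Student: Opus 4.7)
The plan is to construct an explicit treedepth decomposition of $H$ whose weighted depth is bounded by the weighted depth of $T_h$, thereby witnessing the inequality $\wtd(H)\le$ weighted depth of $T_h$.

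First, I would observe that every vertex $v\in V(H)$ lies in the subtree $T_h$. Indeed, since $h$ is the $\le_T$-minimum of $V(H)$ (which exists by \cref{obs:conn-decomposition}), we have $h\le_T v$, and by definition $T_h$ is exactly the set of $\le_T$-descendants of $h$. Next, I would define $F'$ as the restriction of the forest order $\le_T$ to $V(H)$: formally, for $u,v\in V(H)$, set $u\le_{F'} v$ iff $u\le_T v$. This is again a forest order because for each $v\in V(H)$, the set $\{u\in V(H):u\le_{F'}v\}$ is a subset of the $\le_T$-chain below $v$, hence linearly ordered; and $h$ is the unique minimum, so $F'$ is in fact a tree rooted at $h$.

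Second, I would verify that $F'$ is a valid treedepth decomposition of $H$. For any edge $uv\in E(H)\subseteq E(G)$, since $T$ is a treedepth decomposition of $G$, we have $u\le_T v$ or $v\le_T u$, so the same comparability holds in $\le_{F'}$. Thus $F'$ is a rooted tree on $V(H)$ in which every edge of $H$ is in ancestor-descendant relationship.

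Third, I would bound the weighted depth of $F'$. Any root-to-leaf path $P'$ in $F'$ is a maximal chain in $\le_{F'}$ starting at $h$. Because $\le_{F'}$ is the restriction of $\le_T$, the vertices of $P'$ form a chain in $\le_T$ contained in $T_h$, and this chain can be extended to a root-to-leaf path $P$ of $T_h$ by adding $\le_T$-descendants until reaching a leaf of $T_h$. Since the weight function takes values in $\N$ (strictly positive integers), deleting vertices from $P$ can only decrease the total weight; hence the weight of $P'$ is at most the weight of $P$, which in turn is at most the weighted depth of $T_h$. Taking the maximum over all root-to-leaf paths of $F'$ yields that the weighted depth of $F'$ is at most the weighted depth of $T_h$, and consequently $\wtd(H)\le$ weighted depth of $T_h$.

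The argument is entirely definitional, so there is no real obstacle; the only point that requires a moment of care is the observation that positivity of the weights is what allows passing from a subchain $P'$ to its extension $P$ in $T_h$ without increasing total weight. If one wanted to allow weights equal to $0$, the same proof still goes through, and the positivity assumption is only used to highlight that no ``hidden'' savings can come from intermediate vertices of $H$ being replaced by lighter non-$H$ vertices.
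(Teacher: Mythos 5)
Your proof is correct: restricting the forest order $\le_T$ to $V(H)$ yields a treedepth decomposition of $H$ rooted at $h$ whose root-to-leaf paths are subchains of root-to-leaf paths of $T_h$, which is exactly the (omitted) argument the paper relies on when stating this as an observation. The only point of care, that subchains do not gain weight because weights are nonnegative, is handled properly.
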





\medskip\noindent\textbf{Fixed-parameter tractability.}
A parameterized problem $L\subseteq \Sigma^*\times \N$ is \emph{fixed-parameter tractable}~(FPT) if it can be solved in time $f(k)\cdot n^{\Oof(1)}$, where $k$ is the parameter of the instance (encoded in unary) and~$n$ the input size, for some computable function $f$. 

We assume that the integer weights assigned to the vertices of the graph are polynomially bounded in the number of vertices, that is, bounded by $n^c$ on $n$-vertex graphs for some fixed constant $c$, such that polynomial running times are both with respect to the number of vertices and with respect to the weights. 

\section{Efficient algorithms for weighted treedepth}
\label{sec:poly-cases}

\subsection{Fixed-parameter tractability with respect to solution size}

As a warm up, and introduction to the notion, we first present the following simple result. 
\begin{theorem}\label[theorem]{thm:fpt}
  Deciding \WTD is \textsf{NP}-hard even on chordal graphs, and is fixed-parameter tractable when parameterized by the weighted treedepth of the instance.
\end{theorem}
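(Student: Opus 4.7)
The plan is to handle the two parts of the statement separately.

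For the NP-hardness on chordal graphs, I would use a direct reduction from the unweighted treedepth problem, which is known to be NP-hard on chordal graphs by~\cite{DereniowskiN06}. Given a chordal graph~$G$ and an integer~$k$, construct the weighted instance on~$G$ by setting $w(v)=1$ for every vertex~$v$. Then every treedepth decomposition has weighted depth equal to its unweighted depth, so $\wtd(G,w)=\td(G)$, which yields a polynomial-time reduction and transfers hardness to \WTD on chordal graphs.

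For the FPT part, parameterized by $k=\wtd(G,w)$, I would first record two elementary facts. Any vertex~$v$ satisfies $w(v)\le k$, since any root-to-leaf path through~$v$ in any treedepth decomposition has weight at least~$w(v)$. Moreover, since weights are positive integers, a decomposition of weighted depth at most~$k$ has at most~$k$ vertices on any root-to-leaf path, so $\td(G)\le\wtd(G,w)\le k$. In particular, if either condition fails one immediately rejects.

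I would then adapt a standard FPT algorithm for unweighted treedepth, for instance the $2^{\Oof(\td^2)}\cdot n$ dynamic programming algorithm of Reidl et al., or the polynomial-space variant of Nadara et al.\ mentioned in the introduction. Its state, for each connected subgraph encountered during the recursion, records a ``frontier'' chain of at most~$k$ ancestors in the partial decomposition built so far. I would augment this state with the accumulated weight of that ancestor chain, which lies in $\{1,\ldots,k^2\}$ because the chain has at most $k$ vertices of weight at most~$k$, and change the optimization criterion from minimum depth below the frontier to minimum weighted depth below the frontier subject to the bound~$k$. Since the number of DP states grows by only a $\mathrm{poly}(k)$ factor, the resulting algorithm is still FPT in~$k$.

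The main obstacle is conceptual: a decomposition that minimizes unweighted depth need not minimize weighted depth, so one cannot simply invoke the existing algorithm as a black box and re-evaluate its output. However, the underlying DP already implicitly enumerates all candidate decompositions of unweighted depth at most~$k$; carrying the frontier weight in the state and switching the objective from depth to weighted depth is enough to extract a weighted-optimum from the same search space.
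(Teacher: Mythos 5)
Your hardness argument is exactly the paper's: assign weight $1$ to every vertex and invoke NP-hardness of unweighted treedepth on chordal graphs from~\cite{DereniowskiN06}. For the FPT part, however, you take a genuinely different route. The paper does not touch the internals of any treedepth algorithm; instead it gives a black-box reduction in the other direction, from \WTD back to \TD: replace each vertex $x$ by a clique $V_x$ of $w(x)$ copies and each edge $xy$ by a biclique between $V_x$ and $V_y$. Since all of $V_x$ must lie on a single root-to-leaf path of any treedepth decomposition of the blow-up $G'$, one gets $\td(G')=\wtd(G)$, and running the $2^{\Oof(\td^2)}\cdot n$ algorithm on $G'$ (whose size is polynomial because the weights are assumed polynomially bounded) yields $|G|^{c}\cdot 2^{\Oof(\wtd(G)^2)}$. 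This directly refutes your stated obstacle that ``one cannot simply invoke the existing algorithm as a black box'': you can, just on a transformed instance rather than on $G$ itself. Your alternative --- rejecting if some $w(v)>k$, observing $\td(G)\le\wtd(G)\le k$, and threading the accumulated frontier weight through the states of the Reidl et al.\ dynamic program --- is plausible and would even avoid the polynomial bound on the weights (after your rejection step all weights are at most $k$, and the frontier weight is in fact bounded by $k$ rather than your stated $k^2$, being part of a root-to-leaf path of weight at most $k$). But as written it is a sketch whose correctness hinges on unverified details of that algorithm's state space and recurrence, whereas the blow-up argument is self-contained in a few lines. If you keep your route, you need to actually open up the treedepth DP and prove that replacing depth values by weighted depth values preserves the correctness of its join and introduce steps.
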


\begin{proof}
  The hardness holds by a reduction from classical \TD by simply setting the weight of every vertex to $1$, and using that \TD is \NP-hard on chordal graphs~\cite{DereniowskiN06}.
  
  For the FPT algorithm, let $G$ be a graph with weight function $w$. The $w$-blow-up of~$G$ is the graph~$G'$ obtained from $G$ by duplicating every vertex $x$ into a set~$V_x$ of $w(x)$ many vertices and turning each~$V_x$ into a clique. Additionally, for every edge $xy$ of $G$, we add the edges to turn $V_x,V_y$ into a bi-clique. In a treedepth decomposition, every vertex of a clique must appear in one root-to-leaf path, hence $\td(G')=\wtd(G)$. Since we can compute $\td(G')$ in time~$|G'|\cdot 2^{\Oof(\td(G')^2)}$, this computes $\wtd(G)$ in time $|G|^c\cdot 2^{\Oof(\wtd(G)^2)}$, where $c$ is a constant such that $w$ is bounded by $n^c$ (recall that we assume the weights are polynomially bounded in the number of vertices).
\end{proof}

Note that the reduction from \WTD to \TD does not preserve many structural parameters of $G$, such as planarity, exclusion of a minor, etc. 
So \NP-hardness of \WTD on some class~$\mathscr{C}$ might not yield hardness for \TD on this same class.  

\subsection{Weighted treedepth on paths}

We next show that the weighted treedepth of paths can be computed in polynomial time. 

\begin{theorem}\label[theorem]{thm:td-path}
	Given a weighted path $P$, an optimal treedepth decomposition can be computed in time~$O(|P|^3)$.
\end{theorem}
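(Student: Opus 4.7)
The plan is to establish a natural dynamic programming recurrence over contiguous subpaths. Write $P = v_1\ldots v_n$ with weights $w(v_i)$, and let $T(i,j)$ denote the weighted treedepth of the subpath $P_{i,j} = v_i\ldots v_j$, with base case $T(i,i-1) = 0$ for the empty path. I would prove
\[
T(i,j) \;=\; \min_{i\le r\le j}\, \Bigl[\, w(v_r) + \max\bigl(T(i,r-1),\, T(r+1,j)\bigr)\,\Bigr].
\]

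For the upper bound, I explicitly construct a decomposition realising the right-hand side: for the minimising $r$, take $v_r$ as root and hang beneath it two children subtrees that are optimal treedepth decompositions of $P_{i,r-1}$ and $P_{r+1,j}$ respectively. Every root-to-leaf path then has weight $w(v_r)$ plus the weighted depth of one of the two sub-decompositions, matching the claimed bound.

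For the lower bound, let $F$ be any treedepth decomposition of $P_{i,j}$. Since $P_{i,j}$ is connected, \cref{obs:conn-decomposition} gives a unique $\le_F$-minimum $v_r \in \{v_i,\ldots,v_j\}$, which is therefore the root of $F$. The two connected subgraphs $P_L = P_{i,r-1}$ and $P_R = P_{r+1,j}$ each have a unique $\le_F$-minimum $u_L$ and $u_R$ by \cref{obs:conn-decomposition}, both strictly below $v_r$. By \cref{obs:induced-subgraph} the subtree of $F$ rooted at $u_L$ has weighted depth at least $T(i,r-1)$; concatenating with the segment from $v_r$ down to $u_L$ (which contributes at least $w(v_r)$), we get a root-to-leaf path in $F$ of weight at least $w(v_r) + T(i,r-1)$. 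The symmetric argument on $P_R$ gives $w(v_r) + T(r+1,j)$, so the weighted depth of $F$ is at least $w(v_r) + \max(T(i,r-1), T(r+1,j))$. Minimising over $r$ yields the recurrence.

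The algorithm computes $T(i,j)$ bottom-up by increasing $j-i$: there are $O(n^2)$ subproblems and each requires trying $O(n)$ choices of~$r$, giving total time $O(n^3)$; a standard backtracking pass reconstructs an optimal decomposition within the same bound. The main subtlety I anticipate is the lower-bound argument: a priori $F$ need not split cleanly into two subtrees at $v_r$ (the two sides of the path could be interleaved below $v_r$), so one must not argue via ``removing $v_r$ disconnects the decomposition into a decomposition of $P_L$ and one of $P_R$.'' Instead, the pair \cref{obs:conn-decomposition} and \cref{obs:induced-subgraph} cleanly bypasses this, because they concern only the minimum vertex of a connected subgraph and the subtree rooted there, regardless of how the rest of $F$ is organised.
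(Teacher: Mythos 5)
Your proposal is correct and follows essentially the same approach as the paper: the identical dynamic program over contiguous subpaths with the recurrence $\min_r \bigl(w(v_r)+\max(T(i,r-1),T(r+1,j))\bigr)$ and the same $O(n^2)$ subproblems times $O(n)$ choices giving $O(|P|^3)$. In fact, your justification of the lower-bound direction via \cref{obs:conn-decomposition} and \cref{obs:induced-subgraph} is more detailed than the paper's, which states the recurrence without proving its correctness.
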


\begin{proof}
  We compute an optimum treedepth decomposition via a simple dynamic programming approach.
	The algorithm iteratively computes the \WTD of every subpath $P_I$ of $P=(v_0,\ldots, v_n)$. 
  Starting from the subpaths of length 0, all the way to the single subpath of length $n$. 
  Subpaths of length $0$ are trivially solved, as the only treedepth decomposition is to simply take that vertex, hence the weight of the optimal decomposition is the weight of this vertex.

	Given a subpath $P_{[i,j]}$ for some integers $i<j$,
	\[\wtd(P_{[i,j]})=\min\limits_{k\in\{i,\ldots,j\}}\left( w(v_k)+\max(\wtd(P_{[i,k-1]}),\wtd(P_{[k+1,j]}))\right).\]

	Given the order in which the optimal treedepth decomposition for every subpath is computed, the algorithm is given access to the \WTD of every subpath of the form $P_{[i,k]}$ with $i\le k<j$ and of the form $P_{[k,j]}$ with $i<k\le j$.
	The algorithm concludes by selecting the $v_k$ minimizing the resulting weight.
	
	For the running time, each subpath takes time at most $O(|P|)$, assuming that the arithmetic operations can be done in constant time. As there are at most $n^2$ many subpaths, this concludes the proof of \cref{thm:td-path}.
\end{proof}

\subsection{Weighted treedepth on 1-subdivided stars}

Finally, we show that the weighted treedepth of 1-subdivided stars can be computed in polynomial time.

Note that the result does not follow directly from the existence of a small modulator to paths.
Recall that a modulator to a class $\Cc$ of graphs is a set of vertices whose removal results in a graph belonging to~$\Cc$.
In this manner, we obtain for example by \cref{thm:td-path} that weighted treedepth is polynomial-time solvable on the class of cycles.
In a cycle, we can guess the vertex that is the root of the decomposition (which in the running time leads to an additional factor of~$n$), and then are left with a path for which we can solve the problem efficiently.
However, this reasoning does not extend to the case of subdivided stars, even though they have a modulator of size~$1$.
The central vertex (i.e.~the modulator) must not necessarily be the root of the decomposition.

\begin{theorem}
  \WTD is polynomial-time solvable on 1-subdivided stars.
\end{theorem}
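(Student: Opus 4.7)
The plan is to characterize the shape of an optimal treedepth decomposition of a 1-subdivided star and then reduce the optimization to a polynomial-time dynamic program. Let $G$ have center $c$, middle vertices $m_1,\ldots,m_n$, and leaves $\ell_1,\ldots,\ell_n$, with edges $cm_i$ and $m_i\ell_i$ for $i = 1,\ldots,n$. My first step is a structural lemma. Call the path from the root of the decomposition down to $c$ the \emph{spine}. I will show that one may assume the spine consists of $c$ together with $\{m_i : i \in B\}$ for some set $B \subseteq \{1,\ldots,n\}$ (so no leaf $\ell_j$ ever lies on the spine); moreover, for each $i \in B$ the leaf $\ell_i$ is a direct child of $m_i$, and for each $i \notin B$ the two-vertex set $\{m_i, \ell_i\}$ forms an independent subtree below $c$. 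I plan to prove this by local exchange: if $\ell_j$ lies on the spine, then either $m_j$ also lies on the spine (and pulling $\ell_j$ off the spine to become a child of $m_j$ only decreases depths), or $m_j$ is strictly below $c$ (and swapping $\ell_j$ with $m_j$ yields a decomposition of no greater weighted depth). A companion exchange shows that when $i \in B$ the leaf $\ell_i$ should be placed as a direct child of $m_i$ rather than deeper.

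Having fixed the shape, the weighted depth of the corresponding decomposition is fully determined by $B$ and by the order of the $m_i$ for $i \in B$ along the spine. Setting $W_i := w(m_i) + w(\ell_i)$ and $\sigma := w(c) + \sum_{i \in B} w(m_i)$, this depth equals
\[
\max\Bigl(\sigma + \max_{i \notin B} W_i,\; \max_{k \in B}\bigl(D_k + w(\ell_k)\bigr)\Bigr),
\]
where $D_k$ is the spine weight from the root down to and including $m_k$. For a fixed $B$, ordering the $m_i$ on the spine to minimize the second term is a classical single-machine scheduling problem of the form ``minimize $\max_j(C_j+d_j)$'' with processing times $p_i = w(m_i)$ and due dates $d_i = w(\ell_i)$; a standard pairwise-swap argument shows the optimum orders the $m_i$ along the spine (from root down to $c$) by non-increasing $w(\ell_i)$.

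To find the optimum $B$, I then run a dynamic program. After sorting rays so that $w(\ell_1) \geq \cdots \geq w(\ell_n)$, the DP has states $(i,s,b)$, where $i$ counts processed rays in this order, $s$ is the running spine weight $w(c) + \sum_{k \in B,\, k \leq i} w(m_k)$, and $b$ is the running maximum branch depth $\max_{k \in B,\, k \leq i}(D_k + w(\ell_k))$. For each state I store the minimum over all reaching choices of $a := \max_{j \leq i,\, j \notin B} W_j$. The two transitions at step $i+1$ put the next ray either into $B$ (updating $s$ and $b$) or into its complement (updating $a$), and the answer is the minimum of $\max(s+a,b)$ over terminal states $(n,s,b)$. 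Since the weights are polynomially bounded in $n$, the quantities $s$, $b$, and $a$ take polynomially many values, so the DP runs in polynomial time. The hardest part of the proof will be the structural lemma in the first paragraph, which requires a careful case analysis on the relative positions of $m_j$, $\ell_j$, and $c$ in the decomposition to verify that each exchange truly does not increase any root-to-leaf weight.
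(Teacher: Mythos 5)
Your proposal is correct in outline and its structural core coincides with the paper's: you normalize so that every leaf $\ell_i$ hangs directly off its middle vertex $m_i$, observe that the decomposition is determined by the set $B$ of middle vertices placed on the root-to-center spine together with their order, and show via a pairwise-exchange (equivalently, the scheduling rule for minimizing $\max_j(C_j+d_j)$) that the spine should be ordered by non-increasing leaf weight --- this is exactly the paper's first claim. Where you diverge is in how the set $B$ is found. The paper proves a second exchange claim, namely that $B$ must be a \emph{prefix} of the sorted order (if $m_j$ is on the spine and $w(\ell_i)\ge w(\ell_j)$ then $m_i$ is on the spine too), which reduces the search to $n+1$ explicit candidate decompositions $T_0,\dots,T_n$ and yields a strongly polynomial algorithm with no dependence on the magnitude of the weights. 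You instead run a dynamic program indexed by the accumulated spine weight and branch depth, which is only pseudo-polynomial and becomes polynomial solely because of the paper's standing assumption that weights are bounded by $n^c$. Both routes are valid within the paper's framework; the paper's prefix claim buys a cleaner, weight-independent running time, while your DP avoids having to prove that second structural claim. Two small points to watch when you write this up: your state variable $s$ includes $w(c)$ but the branch depths $D_k$ must not (the center sits below all spine vertices), so the transition into $B$ should update $b$ using $s - w(c) + w(m_{k})$; and you should verify that storing only the minimum $a$ per state $(i,s,b)$ is a valid dominance, which it is since both transitions are monotone in $a$.
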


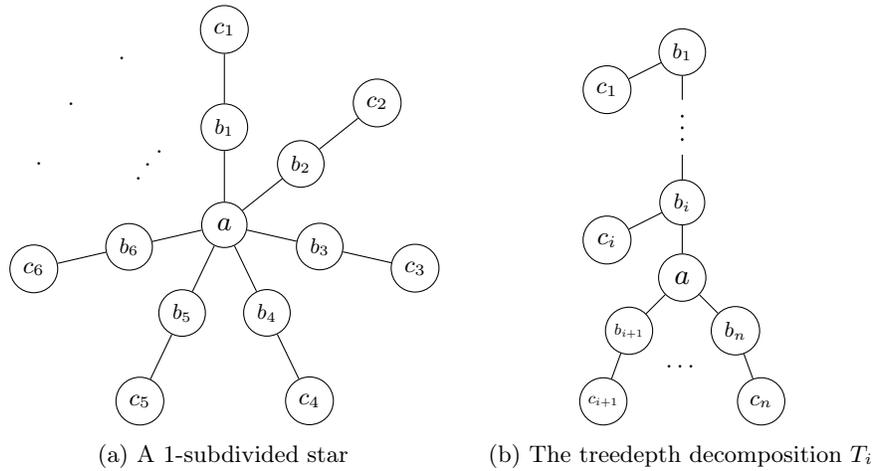
\begin{figure}[h]
	\begin{subfigure}{0.49\linewidth}
		\centering
		\begin{tikzpicture}[vertex/.style={circle,draw}]
			\node[vertex] (a) at (0,0) [scale=1.15] {$a$};
			\foreach \i [count=\j] in {0,...,5} {
				\node[vertex,scale=.9] (b\i) at (-\i*51.4+90:1.3) {$b_\j$};
				\draw (a) -- (b\i);
				\node[vertex] (c\i) at (-\i*51.4+90:2.6) {$c_\j$};
				\draw (b\i) -- (c\i);
			};
			
			\foreach \i in {-10,0,10} {
				\node[] (dot) at (-218.4+\i:1.3) {.};
			};
			\foreach \i in {-20,0,20} {
				\node[] (dot) at (-218.4+\i:2.6) {.};
			};
		\end{tikzpicture}
		\caption{A 1-subdivided star}
	\end{subfigure}
	\begin{subfigure}{0.49\linewidth}
		\centering
		\begin{tikzpicture}
			\node (a) at (0,-3) [circle,draw] [scale=1.2] {$a$};
			
			\node (b1) at (0,0) [circle,draw,scale=.9] {$b_1$};
			\node (db) at (0,-.75) {};
			\node (dd) at (0,-1.25) {};
			\node (bi) at (0,-2) [circle,draw,scale=.9] {$b_i$};
			
			\node (bi1) at ($(a) + (-135:1)$) [circle,draw,scale=.67] {$b_{i+1}$};
			\node (bn) at ($(a) + (-45:1)$) [circle,draw,scale=.9] {$b_n$};
			
			\node (c1) at (-1,-.5) [circle,draw] {$c_1$};
			\node (ci) at (-1,-2.5) [circle,draw,scale=1.05] {$c_{i}$};
			
			\node (ci1) at ($(bi1) + (-110:1)$) [circle,draw,scale=.7] {$c_{i+1}$};
			\node (cn) at ($(bn) + (-70:1)$) [circle,draw,scale=.95] {$c_n$};
			
			\node (d) at ($(0,0)+0.25*(bi1)+0.25*(bn)+0.25*(ci1)+0.25*(cn)$) {$\dots$};
			
			\path
			(b1) edge (db)
			(dd) edge (bi)
			(bi) edge (a)
			(bi1) edge (a)
			(bn) edge (a);

      \draw[line width = .35mm,dash pattern=on .05mm off 1.5mm,line cap=round] (0,-.85) to (0,-1.25);
			
			\path
			(c1) edge (b1)
			(ci) edge (bi)
			(ci1) edge (bi1)
			(cn) edge (bn);
		\end{tikzpicture}
		\caption{The treedepth decomposition $T_i$}
		\label{fig:weighted-star-tree-decomposition}
	\end{subfigure}
\caption{1-subdivided star and its treedepth decomposition}
\end{figure}

\vspace{-5mm}
\begin{proof}
  Let $G$ be a 1-subdivided star with weight function $w\colon V(G)\to\mathbb{N}$. We denote 
  \begin{itemize}
    \item the central vertex in $G$ by~$a$,
    \item the neighbors of $a$ by $b_1,\ldots,b_n$, and
    \item the private neighbor of $b_i$ by $c_i$ for every~$i\le n$.
  \end{itemize}
  Without loss of generality, the order satisfies $w(c_i)\geq w(c_j)$ for $i\leq j$, that is, the~$c_i$ are ordered by decreasing weights, and $w(b_i)\leq w(b_j)$ for $i\leq j$ if $w(c_i)=w(c_j)$, that is, the $b_i$ are ordered by increasing weights when the respective $c_i$ have equal weight.

  We now compute $n+1$ treedepth decompositions $T_0,\ldots, T_n$. We denote by~$d_i$ the weight of~$T_i$. 
  \begin{itemize}
    \item $T_0$ is the decomposition that is equal to $G$ with $a$ as the root. Hence, $a$ is minimal in~$\le_{T_0}$ and $a\le_{T_0} b_j\le_{T_0} c_j$ for $j\leq n$, while all other vertices are incomparable in the tree order~$\leq_{T_0}$. 
    \item $T_i$ for $i \geq 1$ is defined by $\leq_{T_i}$ as follows. We define $b_1\le_{T_i} \ldots \le_{T_i} b_i\le_{T_i} a$, as well as~$a\le_{T_i} b_j$ for $j>i$. Finally, $b_j\le_{T_i} c_j$ for all $j$. Unless to satisfy the transitivity property of $\le_{T_i}$, every other pair of vertices is incomparable. This decomposition is depicted in \cref{fig:weighted-star-tree-decomposition}.
  \end{itemize}
  
  We output $T_i$ with minimal depth $d_i$. 
  This algorithm has polynomial running time because it computes the weighted depth of $n+1$ treedepth decompositions.
  Next, we show that it computes indeed an optimal treedepth decomposition.
  
  Let $T$ be an optimal treedepth decomposition of $G$. 
  First, we show that all $c_i$ are leaves of~$T$.
  Assume that $c_i$ is not a leaf of $T$. 
  Then, we can simply remove~$c_i$ from the decomposition, attach its children at its parent (if $c_i$ was the root, then it has only one child which we make the new root), and add it as a leaf right below $b_i$.
  This is a treedepth decomposition as $c_i$ is connected only to $b_i$ and all other vertices keep the same order as in~$\leq_T$.
  Then, all root-to-leaf paths keep their value or lose the weight of $c_i$.
  The path from the root to the new leaf $c_i$ does not increase the depth of the decomposition, as the vertices of this new path were contained in the old path to $b_i$.
  Hence, we obtain a decomposition of at most the same weighted depth.

  Let $I\coloneqq\{i\leq n\ |\ b_i\leq_T a\}$ be the indices of the $b$-vertices above the vertex $a$ in $T$.
  We call $p_k$ the weight of the path in~$T$ from the root to the leaf $c_k$.

  To verify that the algorithm computes the optimal weighted treedepth, we prove the following two claims.
  First, we show that the vertices $b_i$ above the vertex $a$ in the treedepth decomposition are ordered (as induced by the weights of their $c_i$-neighbors).

  \begin{claim}
    For all $i<j\in I$, we have $b_i\le_T b_j$.
  \end{claim}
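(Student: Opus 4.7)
The first thing I would do is strengthen the cleanup that precedes this claim: the move described there in fact places each $c_i$ as a leaf child of $b_i$ (not just as a leaf), and there is no cost in assuming this. Under this assumption the shape of $T$ becomes extremely rigid: no $c_\ell$ can be an ancestor of $a$ since $c_\ell$ is a leaf, and no $b_\ell$ with $\ell\notin I$ is an ancestor of $a$ by definition of $I$, so the root-to-$a$ path of $T$ must be a chain $b_{k_1}<_T b_{k_2}<_T\dots<_T b_{k_m}<_T a$ where $(k_1,\dots,k_m)$ enumerates $I$. In addition, each $b_{k_s}$ has $c_{k_s}$ as its only off-chain child, and all vertices with index outside $I$ sit in a subtree below $a$. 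The claim then reduces to showing that the sequence $(k_1,\dots,k_m)$ can be taken to be increasing.

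Next, if this sequence is not increasing, it contains an adjacent inversion at some position $s$ with $k_s>k_{s+1}$; writing $i=k_{s+1}$ and $j=k_s$ yields $i<j$ with $b_j<_T b_i$. My plan is then a standard bubble-sort exchange: define $T'$ by swapping the positions of $b_i$ and $b_j$ in the chain (and keeping each $c_k$ as a leaf child of the new position of $b_k$). Validity is immediate because the only edges of $G$ touching the four moved vertices are $ab_i,ab_j,b_ic_i,b_jc_j$, all of which remain ancestor-descendant pairs in $T'$.

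The crux is the depth comparison. Every root-to-leaf path in $T'$ either contains both of the swapped chain positions (so the $b_i,b_j$ contribution is unchanged) or neither, except for the two paths terminating at $c_i$ and $c_j$. Writing $W$ for the total weight of the chain strictly above the swapped region, the four relevant path weights are $W+w(b_j)+w(b_i)+w(c_i)$ and $W+w(b_j)+w(c_j)$ in $T$, versus $W+w(b_i)+w(c_i)$ and $W+w(b_i)+w(b_j)+w(c_j)$ in $T'$. Using the hypothesis $w(c_i)\ge w(c_j)$ (exactly the index ordering applied to $i<j$) together with positivity of the weights, I expect to check that both new values are bounded by the old maximum $W+w(b_j)+w(b_i)+w(c_i)$, so the weighted depth of $T'$ does not exceed that of $T$. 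Hence $T'$ is still optimal but has strictly fewer adjacent inversions; iterating terminates with an optimal decomposition in which the $k_s$'s are sorted in increasing order, which we may take in place of $T$. The main obstacle is exactly this depth inequality: it is the single point where the index ordering is used, and without it the local swap could enlarge the weight of the path ending at the relocated $c_j$.
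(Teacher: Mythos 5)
Your proposal is correct and is essentially the paper's argument: both reduce to swapping an adjacent pair $b_j <_T b_i$ on the chain above $a$ and compare the two affected root-to-leaf paths (to $c_i$ and to $c_j$), with the single use of the ordering being $w(c_j)\le w(c_i)$ together with positivity of $w(b_i)$. Your additional framing (the explicit chain structure after strengthening the leaf-cleanup, and iterating the adjacent transposition as a bubble sort) is a harmless reorganization of the same exchange argument.
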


  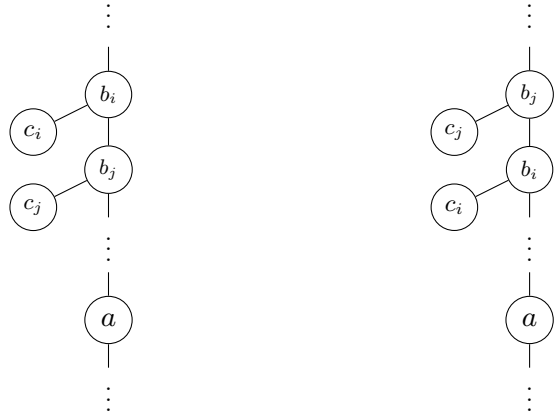
\begin{figure}[h]
    \begin{subfigure}[b]{.45\linewidth}
      \centering
      \begin{tikzpicture}
        \node (a) at (0,-4) [circle,draw,scale=1.2] {$a$};

        \node (bi) at (0,-1) [circle,draw,scale=.93] {$b_i$};
        \node (bj) at (0,-2) [circle,draw,scale=.89] {$b_j$};

        \node (ci) at (-1,-1.5) [circle,draw,scale=1.02] {$c_i$};
        \node (cj) at (-1,-2.5) [circle,draw,scale=.95] {$c_j$};

        \node (di) at ($(bi)+(0,.75)$) {};
        \node (dii) at ($(di)+(0,.4)$) {$\vdots$};

        \node (dj) at ($(bj)+(0,-.75)$) {};
        \node (djj) at ($(dj)+(0,-.2)$) {$\vdots$};
        \node (daj) at ($(a)+(0,.75)$) {};

        \node (da) at ($(a)+(0,-.75)$) {};
        \node (daa) at ($(da)+(0,-.2)$) {$\vdots$};

        \path
        (bj) edge (bi)
        (ci) edge (bi)
        (cj) edge (bj)
        (bi) edge (di)
        (a) edge (da)
        (a) edge (daj)
        (bj) edge (dj);
      \end{tikzpicture}
      \subcaption{Optimal treedepth decomposition $T$}
    \end{subfigure}
    \begin{subfigure}[b]{.45\linewidth}
      \centering
      \begin{tikzpicture}
        \node (a) at (0,-4) [circle,draw,scale=1.2] {$a$};

        \node (bi) at (0,-1) [circle,draw,scale=.89] {$b_j$};
        \node (bj) at (0,-2) [circle,draw,scale=.93] {$b_i$};

        \node (ci) at (-1,-1.5) [circle,draw,scale=.95] {$c_j$};
        \node (cj) at (-1,-2.5) [circle,draw,scale=1.02] {$c_i$};

        \node (di) at ($(bi)+(0,.75)$) {};
        \node (dii) at ($(di)+(0,.4)$) {$\vdots$};

        \node (dj) at ($(bj)+(0,-.75)$) {};
        \node (djj) at ($(dj)+(0,-.2)$) {$\vdots$};
        \node (daj) at ($(a)+(0,.75)$) {};

        \node (da) at ($(a)+(0,-.75)$) {};
        \node (daa) at ($(da)+(0,-.2)$) {$\vdots$};

        \path
        (bj) edge (bi)
        (ci) edge (bi)
        (cj) edge (bj)
        (bi) edge (di)
        (a) edge (da)
        (a) edge (daj)
        (bj) edge (dj);
      \end{tikzpicture}
      \subcaption{Treedepth decomposition $T'$}
    \end{subfigure}
    \caption{In the optimal treedepth decomposition $T$, the vertices above the central vertex $a$ must be ordered.}
    \label{fig:weighted-star-ordered}
  \end{figure}

  \begin{claimproof}
    Assume towards a contradiction that there is a treedepth decomposition $T'$ of weight smaller than the weight of $T$ where $b_j\leq_T b_i$ for $i<j\in I$.
    Without loss of generality, let $b_i$ be the child of $b_j$ in the treedepth decomposition $T'$.

    We show that the weighted depth of $T$ is smaller than the weighted depth of~$T'$ by inspecting the weight $p_k$ of every path from the root to the leaf $c_k$ in $T$ and finding a larger~$p'_{k'}$ in $T'$.

    \begin{enumerate}
      \item The paths from the root of $T$ to a leaf $c_k$ with $b_k <_T b_i$ have the same weights as these paths in~$T'$, i.e.~$p_k = p'_k$ for every $k$ with $b_k<_T b_i$.
      \item The path from the root of $T$ to the leaf $c_i$ has at most the weight of this path in $T'$:
      \begin{align*}
        p_i=\sum_{b_\ell\leq_T b_i}w(b_\ell)+w(c_i)
        \leq\sum_{b_\ell\leq_{T'} b_i}w(b_\ell)+w(c_i)
        =p'_i.
      \end{align*}
      \item The path from the root of $T$ to the leaf $c_j$ has at most the weight of the path from the root of $T'$ to the leaf $c_i$:
      \begin{align*}
        p_j=\sum_{b_\ell\leq_T b_j}w(b_\ell)+w(c_j)
        \leq\sum_{b_\ell\leq_{T'} b_i}w(b_\ell)+w(c_i)
        =p'_i.
      \end{align*}
      \item And lastly, the paths from the root of $T$ to a leaf $c_k$ with $b_k>_Tb_i$ have the same weights as these paths in $T'$, i.e.~$p_k=p'_k$ for every $k$ with $b_k>_T b_j$.
    \end{enumerate}

    Since every path from the root to a leaf in $T$ has at most the weight of a path from the root to a leaf in $T'$, the weighted depth of $T$ is at most the weighted depth of $T'$.
  \end{claimproof}

  Next, we show that if a vertex $b_j$ lies above $a$ in the treedepth decomposition, then every vertex $b_i$ with $i<j$ lies above $a$.

  \begin{claim}
    For all $i<j\leq n$, if $j\in I$, then $i\in I$.
  \end{claim}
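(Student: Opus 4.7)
The plan is to pick $T$ to be an optimal treedepth decomposition whose index set $I=I(T)$ has minimum size among all optima, and to assume towards a contradiction that $I$ is not downward-closed: there are $i < j$ with $j \in I$ and $i \notin I$. From such a $T$ I will build another decomposition $T'$ with $I(T') = I \setminus \{j\}$ and weighted depth at most that of~$T$. Since $T$ is optimal, this forces $T'$ to be optimal too, contradicting the minimality of~$|I|$.

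To construct $T'$, I detach $b_j$ together with its leaf child $c_j$ from its place in the chain above~$a$ and reattach $b_j$ as a new direct child of~$a$, keeping $c_j$ as the child of $b_j$. Everything else in $T$ is preserved, so in particular $b_i$ remains strictly below~$a$, and the chain above~$a$ in $T'$ is just $I \setminus \{j\}$ in its previous relative order. Validity of $T'$ as a treedepth decomposition is immediate, because the only edges incident to $b_j$ are $ab_j$ and $b_jc_j$, both of which remain in ancestor--descendant relation in~$T'$.

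Writing $S = \sum_{\ell \in I} w(b_\ell)$, the comparison of root-to-leaf paths splits into four cases. Paths to $c_\ell$ with $\ell \in I \setminus \{j\}$ and $\ell < j$ are unchanged. Paths to $c_\ell$ with $\ell \in I \setminus \{j\}$ and $\ell > j$, as well as paths to $c_\ell$ for $\ell \notin I \cup \{j\}$ (the vertices already below~$a$ in~$T$), all strictly shorten by exactly $w(b_j)$, since $b_j$ no longer appears on the chain above~$a$ that each such path traverses. The only path whose weight can increase is the new root-to-$c_j$ path in~$T'$: it runs through $I \setminus \{j\}$, then~$a$, then~$b_j$, then $c_j$, for a total weight of $S + w(a) + w(c_j)$.

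Bounding this new path is the main step, and it is where the hypothesis $i \notin I$ with $i < j$ is used. The root-to-$c_i$ path in~$T$ descends through the entire chain~$I$, then through~$a$ (as $b_i$ lies strictly below~$a$), then through at least $b_i$ and $c_i$, so the weighted depth of~$T$ is at least $S + w(a) + w(b_i) + w(c_i) \geq S + w(a) + w(c_i) \geq S + w(a) + w(c_j)$, where the last step uses the ordering convention $w(c_i) \geq w(c_j)$ for $i < j$. Hence the new path in~$T'$ has weight at most the weighted depth of~$T$, and combined with the other cases this gives the desired inequality between the weighted depths of $T'$ and $T$, completing the contradiction.
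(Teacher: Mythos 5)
Your proof is correct and is essentially the paper's argument: the same exchange step of detaching $b_j$ (with its leaf $c_j$) from the chain above $a$ and reattaching it as a child of $a$, with the new root-to-$c_j$ path bounded by the old root-to-$c_i$ path using $w(c_i)\geq w(c_j)$ for $i<j$. The only difference is the logical packaging (you take an optimum with $|I|$ minimum and contradict minimality, while the paper directly compares the modified decomposition against the assumed counterexample), which does not change the substance.
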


  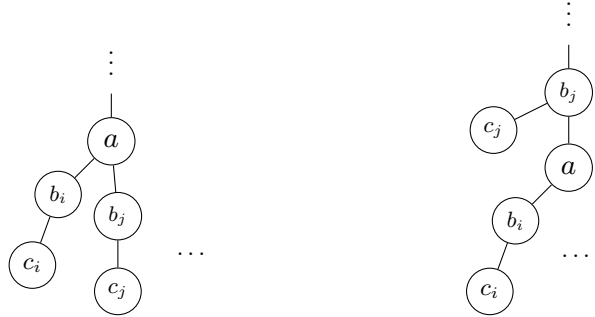
\begin{figure}[h]
    \begin{subfigure}[b]{.45\linewidth}
      \centering
      \begin{tikzpicture}
        \node (a) at (0,-3) [circle,draw,scale=1.2] {$a$};
        
        \node (bi) at ($(a) + (-135:1)$) [circle,draw,scale=.93] {$b_i$};
        \node (bj) at ($(a) + (-85:1)$) [circle,draw,scale=.89] {$b_j$};
        
        \node (ci) at ($(bi) + (-110:1)$) [circle,draw,scale=1.02] {$c_i$};
        \node (cj) at ($(bj) + (-90:1)$) [circle,draw,scale=.95] {$c_j$};

        \node (da) at ($(a)+(0,.75)$) {};
        \node (daa) at ($(da)+(0,.4)$) {$\vdots$};

        \node (d) at ($(1,0)+0.5*(bj)+0.5*(cj)$) {$\dots$};

        \path
        (bi) edge (a)
        (bj) edge (a)
        (ci) edge (bi)
        (cj) edge (bj)
        (a) edge (da);
      \end{tikzpicture}
      \subcaption{Optimal treedepth decomposition $T$}
    \end{subfigure}
    \begin{subfigure}[b]{.45\linewidth}
      \centering
      \begin{tikzpicture}
        \node (a) at (0,-3) [circle,draw,scale=1.2] {$a$};

        \node (bi) at ($(a) + (-135:1)$) [circle,draw,scale=.93] {$b_i$};
        \node (bj) at (0,-2) [circle,draw,scale=.89] {$b_j$};

        \node (ci) at ($(bi) + (-110:1)$) [circle,draw,scale=1.02] {$c_i$};
        \node (cj) at (-1,-2.5) [circle,draw,scale=.95] {$c_j$};
        
        \node (dj) at ($(bj)+(0,.75)$) {};
        \node (djj) at ($(dj)+(0,.4)$) {$\vdots$};

        \node (d) at ($(1,0)+0.5*(bi)+0.5*(ci)$) {$\dots$};

        \path
        (bi) edge (a)
        (bj) edge (a)
        (ci) edge (bi)
        (cj) edge (bj)
        (bj) edge (dj);
      \end{tikzpicture}
      \subcaption{Treedepth decomposition $T'$}
    \end{subfigure}
    \caption{In the optimal treedepth decomposition, the vertex $b_j$ only lies above the central vertex~$a$ if $b_i$ with $i<j$ lies above $a$.}
    \label{fig:weighted-star-above}
  \end{figure}

  \begin{claimproof}
    To show this claim, we actually prove that for all $i<j\leq n$, if $i\notin I$, then $j\notin I$.

    Assume towards a contraction that there exists a treedepth decomposition~$T'$ of smaller weighted depth with $I'\coloneqq\{\ell\leq n\ |\ b_\ell\leq_{T'} a\}$ where $i\notin I'$ for $i<j=\max\{I'\}$.
    
    We show that the weighted depth of $T$ is smaller than the weighted depth of~$T'$ by inspecting the weight $p_k$ of every path from the root to the leaf $c_k$ in $T$ and finding a larger $p'_{k'}$ in $T'$.
    
    \begin{enumerate}
      \item The paths from the root of $T$ to a leaf $c_k$ for $k\in I$ have the same weights as these paths in~$T'$, i.e.~$p_k=p'_k$ for every $k$ with $b_k<_T b_i$.
      \item The path from the root of $T$ to the leaf $c_j$ has at most the weight of the path from the root to the leaf $c_i$ in $T'$:
      \begin{align*}
        p_j&=\sum_{\ell\in I} w(b_\ell)+a+w(b_j)+w(c_j)\\
        &\leq\sum_{\ell\in I} w(b_\ell)+w(b_j)+a+w(b_i)+w(c_i)\\
        &=p'_i.
      \end{align*}
      \item The paths from the root of $T$ to a leaf $c_k$ for $k\notin I$, $k\neq j$ have at most the weight of these paths in $T'$:
      \begin{align*}
        p_k&=\sum_{\ell\in I} w(b_\ell)+a+w(b_k)+w(c_k)\\
        &\leq\sum_{\ell\in I} w(b_\ell)+w(b_j)+a+w(b_k)+w(c_k)\\
        &=p'_k.
      \end{align*}
    \end{enumerate}

    Since every path from the root to a leaf in $T$ has at most the weight of a path from the root to a leaf in $T'$, the weighted depth of $T$ is at most the weighted depth of $T'$.
  \end{claimproof}

  Therefore, the algorithm outputs the optimal weighted treedepth.
  This concludes this proof.
\end{proof}

\section{NP-hardness of weighted treedepth in graphs of bounded degree}
\label{sec:hardness}

We now come to the technically most demanding contribution of our paper and prove the following theorem. 

\begin{theorem}\label[theorem]{thm:td-degree}
  \WTD is \textsf{NP}-hard on graphs of maximum degree~10.
\end{theorem}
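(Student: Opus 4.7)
The plan is to reduce from \VC on cubic graphs, which is known to be \textsf{NP}-hard. Given an instance $(G,k)$ of \VC with $G$ cubic, I would build a weighted graph $G'$ of maximum degree at most $10$ together with a target weight $W$ such that $G$ has a vertex cover of size at most $k$ iff $\wtd(G') \leq W$. Containment in \textsf{NP} is immediate, as a treedepth decomposition is a polynomial-sized certificate and, by our convention, all weights are polynomially bounded.

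The construction would use three kinds of ingredients: a \emph{skeleton}, one \emph{vertex gadget} $X_v$ per vertex $v\in V(G)$, and one \emph{edge gadget} $Y_{uv}$ per edge $uv\in E(G)$. The skeleton is a heavy structure (e.g.\ a long path, or a bundle of paths implementing a "budget" of size $k$) whose weighted treedepth is known exactly and which, in any near-optimal decomposition of $G'$, must be laid out in an essentially canonical way. Each $X_v$ contains a designated \emph{selector} vertex $s_v$ whose position relative to the skeleton in the decomposition encodes the Boolean choice "$v$ is in the cover" (elevated above the skeleton) or "$v$ is not in the cover" (kept below). Each $Y_{uv}$ is a small connected graph, attached by a bounded number of edges to $X_u$ and $X_v$, and designed so that \cref{obs:conn-decomposition} gives it a unique minimal vertex in any treedepth decomposition; the weights are calibrated so that the unique way to keep the weighted depth of the root-to-leaf path through $Y_{uv}$ below $W$ is that at least one of $s_u, s_v$ is an ancestor of that minimal vertex. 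Cubic $G$ together with a constant-size interface between $X_v$, the skeleton and the (at most three) incident $Y_{uv}$'s is what will let me cap the maximum degree at $10$.

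Given this setup, the forward direction is a construction: from a cover $C$ with $|C|\leq k$, I place the selectors $\{s_v:v\in C\}$ at the top of the decomposition in any order, then the skeleton, then the remaining parts of every $X_v$ and every $Y_{uv}$ in the canonical way prescribed by the gadget design; bookkeeping the weight of each root-to-leaf path then gives at most $W$. The reverse direction is the delicate one. Starting from a decomposition $T$ with weighted depth at most $W$, I apply \cref{obs:conn-decomposition} and \cref{obs:induced-subgraph} repeatedly: first to argue that the skeleton, being a connected heavy subgraph, must be laid out in the canonical way (otherwise the skeleton alone overflows $W$); then, edge by edge, to argue that the unique minimal vertex of $Y_{uv}$ in $T$ has at least one of $s_u,s_v$ as an ancestor (else the path from the root through $Y_{uv}$ exceeds $W$); and finally to use the skeleton's budget to bound the number of elevated selectors by $k$. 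Reading off $C=\{v: s_v \text{ is above the skeleton in } T\}$ then produces a cover of size at most $k$.

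The main obstacle, and the reason the statement is non-trivial, is \emph{gadget rigidity} under the bounded-degree constraint. In the tree construction of \cite{DereniowskiN06}, one may freely attach many heavy children to a single vertex to force a canonical decomposition; and in unbounded degree one could use cliques to pin vertices together in ancestor-descendant chains. Neither is available here: every selector, every skeleton junction and every edge-gadget attachment must be realized with degree at most $10$, yet the weight arithmetic must still make every "wrong" placement of a selector or $Y_{uv}$ either create a forbidden long ancestor chain or force a root-to-leaf path heavier than $W$. Engineering weights and gadget sizes simultaneously so that the threshold $W$ is met exactly when $|C|\leq k$, while never exceeding degree $10$ at any junction, is where I expect most of the technical effort to lie; the remainder of the proof is then a careful but routine case analysis using the two observations about connected subgraphs.
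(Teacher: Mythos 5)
Your proposal correctly identifies the source problem (the paper also reduces from \VC on cubic graphs, \textsf{NP}-hard by~\cite{GareyJS76}) and the general shape of the argument, but it is a plan rather than a proof: no gadget is actually constructed, no weights are specified, and no rigidity argument is carried out. You acknowledge this yourself when you write that engineering the weights and gadgets ``is where I expect most of the technical effort to lie'' --- but that engineering \emph{is} the theorem. In particular, your central mechanism, a per-edge gadget $Y_{uv}$ whose weights force ``at least one of $s_u,s_v$ is an ancestor of its minimal vertex,'' is left entirely unspecified, and it is not at all clear how to realize such an OR-constraint purely by weight thresholds on root-to-leaf paths while keeping degree at most $10$; a wrong placement of $Y_{uv}$ could be absorbed into a light branch of the decomposition without ever creating a heavy path unless something global prevents it. As it stands, the reverse direction of your reduction has nothing to lean on.

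The paper's construction avoids per-edge gadgets altogether and uses a \emph{separation} mechanism instead, which is the idea your sketch is missing. Each vertex $v$ is duplicated into $v_1,v_2$ (weight $2$) with three guard vertices $v_{g_1},v_{g_2},v_{g_3}$ (weight $1$) adjacent to both copies; each edge $uv$ of $G$ becomes the four edges $u_iv_j$; and three heavy complete binary trees $T_1,T_2,T_3$ are attached via their leaves to prescribed vertices of each gadget. Because each $T_i$ is connected and very heavy, \cref{obs:conn-decomposition} and \cref{obs:induced-subgraph} force their minimal elements to be pairwise incomparable in any decomposition of weighted depth $k'$, so the set $S'$ of vertices above their meeting point must \emph{separate} $T_1,T_2,T_3$ in $G'$. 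Separating them at vertex $v$ costs either $4$ (both copies $v_1,v_2$: ``$v$ in the cover'') or $3$ (all three guards: ``$v$ not in the cover''), and if neither endpoint of an edge $uv$ is selected, the cross edges $u_iv_j$ reconnect $T_1$ to $T_2$. The budget $3n+k$ then caps the number of $4$-cost choices at $k$. This global separation argument is what replaces your unrealized per-edge weight calibration, and it is the step you would need to supply to turn your outline into a proof.
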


Recall that containment in \textsf{NP} is trivial, hence, the theorem implies \textsf{NP}-completeness of \WTD on graphs of maximum degree 10. 

To show the \textsf{NP}-hardness of weighted treedepth in graphs of bounded degree, we reduce from the vertex cover problem on graphs of maximum degree 3 to weighted treedepth in graphs of maximum degree 10. 
Recall that a \emph{vertex cover} of a graph $G$ is a vertex set that includes at least one endpoint of every edge of~$G$. \VC is \textsf{NP}-hard on cubic graphs~\cite{GareyJS76}.


The rest of this section is devoted to the presentation of the reduction and its proof of correctness. 

Let $(G,k)$ be an instance of \VC, where all vertices of $G$ have degree $3$. We assume without loss of generality that the number of vertices $n:=|G|$ is a power of~$2$, adding isolated vertices if needed (at most doubling the size of the instance), and that $1<k<n-1$ because any set of $n-1$ vertices is a vertex cover.

\subsubsection*{Definition of the \WTD instance $(G',k')$.}

\mbox{}\\[3pt]
The graph~$G'$ is depicted in \cref*{picture-bdegree}. In detail, the instance $(G',k')$ is constructed as follows:
\begin{itemize}
	\item $G'$ contains two copies $v_1,v_2$ of every vertex $v\in V(G)$. We set $w(v_1)=w(v_2)=2$. 
	\item For every edge $uv\in E(G)$, we create the edges $u_1v_1$, $u_2v_2$, $u_1v_2$, and $u_2v_1$ in~$G'$.
	\item For every vertex $v\in V(G)$, we create three vertices $v_{g_1}$, $v_{g_2}$, and $v_{g_3}$. We set $w(v_{g_i}) = 1$ and connect each $v_{g_i}$ to both $v_1$ and $v_2$.
	\item We now create three binary trees $T_1$, $T_2$, and $T_3$ with exactly $n$ leaves (which is possible since~$n$ is a power of 2). For every $i\le 3$ and $v\in V(G)$, there is a leaf of $T_i$ named $v_{t_i}$.
	\item For every $v\in V(G)$, we connect $v_1$ and $v_{g_1}$ to $v_{t_1}$, $v_2$ and $v_{g_2}$ to $v_{t_2}$, and $v_{g_3}$ to~$v_{t_3}$.
	\item The weight of each root of the trees is $\beta = 3n+k+3$, then $w(y)=w(x)+3$ when $y$ is a child of $x$.
	\item Finally, let $\alpha$ be the sum of the weights of a leaf-to-root path in either $T_i$. \linebreak That is $\alpha=\sum\limits_{i=0}^{\log(n)}(\beta+3i)$. We set $k'=3n +k+\alpha+2$.
\end{itemize}

\newcommand{\createlabel}[2]{
${#1}$\\\textcolor{gray}{
	\hspace{0.5mm}#2}
}
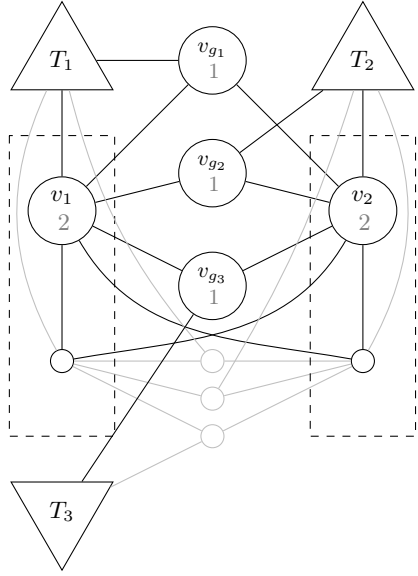
\begin{figure}
	\centering
	\begin{tikzpicture}
		\node[vertex] (v1) at (0,0) {\createlabel{v_1}{2}};
		
		\node[vertex] (v2) at (4,0) {\createlabel{v_2}{2}};
		
		\node[vertex,scale=.91] (vg1) at ($(v1)!0.5!(v2) + (0, 2)$) {\createlabel{v_{g_1}}{1}};
		\node[vertex,scale=.91] (vg2) at ($(v1)!0.5!(v2) + (0, 0.5)$) {\createlabel{v_{g_2}}{1}};
		\node[vertex,scale=.91] (vg3) at ($(v1)!0.5!(v2) + (0, -1)$) {\createlabel{v_{g_3}}{1}};
		
		\draw (v1) -- (vg1);
		\draw (v1) -- (vg2);
		\draw (v1) -- (vg3);
		
		\draw (v2) -- (vg1);
		\draw (v2) -- (vg2);
		\draw (v2) -- (vg3);

		\node[vertex] (u1) at (0,-2) {};
		
		\node[vertex] (u2) at (4,-2) {};
		
		\node[vertex, lightgray] (ug1) at ($(u1)!0.5!(u2) + (0, 0)$) {};
		\node[vertex, lightgray] (ug2) at ($(u1)!0.5!(u2) + (0, -0.5)$) {};
		\node[vertex, lightgray] (ug3) at ($(u1)!0.5!(u2) + (0, -1)$) {};
		
		\draw[lightgray] (u1) -- (ug1);
		\draw[lightgray] (u1) -- (ug2);
		\draw[lightgray] (u1) -- (ug3);
		
		\draw[lightgray] (u2) -- (ug1);
		\draw[lightgray] (u2) -- (ug2);
		\draw[lightgray] (u2) -- (ug3);
		
		\draw[] (v1) -- (u1);
		\draw[] (v2) -- (u2);
		
		\draw[] (v1) to[out=-60, in=170] (u2);
		\draw[] (v2) to[out=-120, in=10] (u1);
		
		\node[regular polygon, regular polygon sides = 3, draw=black] (t1) at (0,2) {$T_1$};
		
		\node[regular polygon, regular polygon sides = 3, draw=black] (t2) at (4,2) {$T_2$};
		
		\node[regular polygon, regular polygon sides = 3, rotate=180, draw=black] (t3) at (0, -4) {\textcolor{white}{$T_3$}};
		\node[] at (t3) (t3text) {$T_3$};
		
		\draw[draw=black, dashed] (-0.7,1) rectangle (0.7,-3);
		
		\draw[draw=black, dashed] (3.3,1) rectangle (4.7,-3);
		
		\draw (t1) -- (v1);
		\draw[lightgray] (t1) to[bend right=27] (u1);
		
		\draw (t2) -- (v2);
		\draw[lightgray] (t2) to[bend left=27] (u2);

		\draw (vg1) -- (t1);
		\draw[lightgray] (ug1) to[bend left=13] (t1);
		
		\draw (vg2) -- (t2);
		\draw[lightgray] (ug2) to[bend right=7] (t2);
		
		\draw (vg3) -- (t3);
		\draw[lightgray] (ug3) -- (t3);
	\end{tikzpicture}
	\caption{Result of the reduction when starting with two connected vertices. The names and weights are written exemplary.}\label{picture-bdegree}
\end{figure}

Note that every $v_{g_i}$ has exactly three neighbors and every vertex in $T_i$ has degree 3 (except for the root of degree 2). Furthermore, let $v$ be any vertex of~$G$ and $x,y,z$ its three neighbors.
Then, $v_1$ is only adjacent to $v_{t_1}$, the three $v_{g_i}$, $x_1,y_1,z_1$, as well as $x_2,y_2,z_2$. So~$v_1$ (and equivalently $v_2$) has degree 10, which is therefore the maximum degree of $G'$.

\subsubsection*{Correctness of the reduction, first direction.}
\begin{lemma}
	If there exists a vertex cover of size $k$ in $G$, then there exists a treedepth decomposition of weighted depth $k'$ for $G'$.
\end{lemma}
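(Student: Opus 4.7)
The plan is to construct, from the given vertex cover $C \subseteq V(G)$ of size $k$, an explicit treedepth decomposition $T$ of $G'$ whose weighted depth equals $k'$. I would arrange $T$ in three layers: a \emph{chain} at the top of $T$ containing a specific selection of gadget vertices, the three trees $T_1, T_2, T_3$ placed directly below the last chain vertex as sibling subtrees (each with its natural rooted binary-tree structure), and finally at most one gadget vertex attached as a leaf below each tree-leaf $v_{t_i}$.

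The vertex cover $C$ dictates what goes where. For each $v \in C$, I would put both copies $v_1$ and $v_2$ on the chain and attach each $v_{g_i}$ as a leaf below the corresponding $v_{t_i}$. For each $v \notin C$, I would instead put all three gadget vertices $v_{g_1}, v_{g_2}, v_{g_3}$ on the chain, and attach $v_1$ below $v_{t_1}$ and $v_2$ below $v_{t_2}$ (with nothing attached below $v_{t_3}$). This places every vertex of $G'$ exactly once. Computing the chain weight yields $4k + 3(n-k) = 3n + k$, so adding the weight $\alpha$ of a tree path plus the at-most-$2$ weight of the leaf attached below a tree leaf gives $3n + k + \alpha + 2 = k'$, matching the target weighted depth.

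The verification will proceed edge by edge. Tree-internal edges are handled by the natural rooted structure of each $T_i$. The edges $v_1v_{t_1}, v_{g_1}v_{t_1}, v_2v_{t_2}, v_{g_2}v_{t_2}, v_{g_3}v_{t_3}$ are fine because the relevant endpoint is either placed directly below the corresponding tree leaf or lies on the chain above the trees. The gadget-internal edges $v_1v_{g_i}$ and $v_2v_{g_i}$ are respected because for every $v$ exactly one of $\{v_1, v_2\}$ or $\{v_{g_1}, v_{g_2}, v_{g_3}\}$ lies on the chain, and chain vertices are ancestors of everything below the chain. Edges $u_iv_j$ coming from $uv \in E(G)$ are handled by the vertex-cover property: since $u \in C$ or $v \in C$, at least one side has both its copies on the chain, making them ancestors of all copies of the other endpoint regardless of where those are placed.

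The main obstacle I anticipate is the case of cross-tree edges such as $v_1v_{g_2}$ or $v_2v_{g_3}$ for $v \notin C$, where $v_1$ lies below $v_{t_1}$ in the $T_1$-subtree and $v_2$ lies below $v_{t_2}$ in the $T_2$-subtree — hence in sibling branches of the chain. Correctness here rests on two observations: the chain-placed $v_{g_i}$ is a common ancestor of both $v_1$ and $v_2$ (handling the gadget-to-gadget edges across trees), and $v_1, v_2$ themselves are \emph{not} adjacent in $G'$, so their placement in sibling subtrees never violates the ancestor-descendant requirement. Carrying out this case-distinction cleanly, and likewise checking the doubled-edge structure $u_iv_j$ when both $u_i$ and $v_j$ end up below different tree leaves, is the technical heart of the argument.
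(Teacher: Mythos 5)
Your construction is exactly the one in the paper: a top chain consisting of $v_1,v_2$ for cover vertices and $v_{g_1},v_{g_2},v_{g_3}$ for non-cover vertices, the three trees attached below as siblings, and the remaining gadget vertices hung as leaves under the appropriate $v_{t_i}$, giving weighted depth $3n+k+\alpha+2=k'$. The edge-by-edge verification, including the observation that $v_1$ and $v_2$ are non-adjacent and that the vertex-cover property prevents two non-cover copies from landing under different tree leaves while being adjacent, is correct and matches the paper's (terser) argument.
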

\begin{proof}
	Let $S$ be a vertex cover of size $k$ in $G$. 
	We construct the treedepth decomposition as follows: 
	The top of the treedepth decomposition is a path with the vertices $v_1$ and $v_2$ for $v \in S$ in any order. 
	Then for every~$v\notin S$, we continue the path by adding the vertices $v_{g_1}, v_{g_2}$ and $v_{g_3}$ in any order. 
	So far, this tree has a weighted depth of $4k+3(n-k)$ (since the two copies of each of the $k$ vertices of $S$ have weight~2 and the three middle vertices for the~$n-k$ vertices not in $S$ have weight~1).

	Observe that if we remove in $G'$ all vertices selected so far, the trees $T_1, T_2$ and~$T_3$ are in separate connected components. Additionally, if $T_1,T_2$ and $T_3$ are removed, all vertices are isolated. Thus, we continue the treedepth decomposition by attaching at the last selected vertex the trees $T_1$, $T_2$ and $T_3$ as subtrees. So far the decomposition has weighted depth $4k+3(n-k) + \alpha$. 
	
	For each vertex $v$, we either selected $v_1$ and $v_2$, leaving the $v_{g_i}$ only adjacent to $v_{t_i}$ respectively; or selected each $v_{g_i}$ and leaving $v_1$ and $v_2$ only adjacent to~$v_{t_1}$ and $v_{t_2}$ respectively. And since $S$ is a vertex cover, there are no connections between the remaining $v_i$. So each remaining vertex can be attached (in the treedepth decomposition) to the only leaf of the tree it is connected to. This increases the overall weighted depth by exactly 2 (since $k<n$), resulting in the demanded treedepth decomposition of depth $k'=4k+3(n-k)+\alpha+2$.
\end{proof}

\vspace{-3mm}
\subsubsection*{Correctness of the reduction, second direction.}
\begin{lemma}\label[lemma]{lem:small-td-small-vs}
	If there is a treedepth decomposition of weighted depth $k'$ in~$G'$, then there is a vertex cover of size $k$ in $G$.
\end{lemma}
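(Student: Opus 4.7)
My plan is to extract a vertex cover $S \subseteq V(G)$ of size at most $k$ from the decomposition $T$, mirroring the forward direction: I place in $S$ exactly those $v \in V(G)$ whose heavy copies $v_1, v_2$ sit strictly above all three tree decompositions of $T_1, T_2, T_3$ inside $T$. The argument has two ingredients: a local structural analysis of each gadget $M_v = \{v_1, v_2, v_{g_1}, v_{g_2}, v_{g_3}\}$, and a global weight count matching the slack in $k'$.

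First, I would establish $\wtd(T_i) = \alpha$ for each $i \in \{1,2,3\}$. The upper bound is witnessed by the natural decomposition of $T_i$ rooted at its graph root; the matching lower bound follows by induction on the balanced-binary structure, using that weights strictly increase with depth so that choosing any non-root as the root of the decomposition strictly inflates the weighted depth. Letting $h_i$ denote the $\leq_T$-minimum of $V(T_i)$ (which exists by \cref{obs:conn-decomposition}), $U_i$ the set of strict ancestors of $h_i$ in $T$, and $W_i := w(U_i)$, \cref{obs:induced-subgraph} together with $\wtd(T_i) = \alpha$ yields $W_i + \alpha \leq k'$, hence $W_i \leq 3n + k + 2$ for each~$i$. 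A refinement, using that weight-gaps in $T_i$ are multiples of~$3$, shows that if the restriction of $\leq_T$ to $V(T_i)$ is not the natural decomposition of $T_i$, then the sub-decomposition has weighted depth at least $\alpha + 3$, and hence $W_i \leq 3n + k - 1$ in that case.

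Next I analyse the placement of each gadget $M_v$ in $T$. The triangles $\{v_j, v_{g_j}, v_{t_j}\}$ for $j \in \{1,2\}$, the $K_{2,3}$-structure between $\{v_1, v_2\}$ and $\{v_{g_1}, v_{g_2}, v_{g_3}\}$, and the edge $v_{g_3} v_{t_3}$ rigidly constrain where $M_v$ can sit in $\leq_T$. Via local exchange arguments (reordering comparable vertices within a chain, which never increases the weighted depth), I would argue that each $v$ may be assumed to be in one of two canonical types matching the forward construction: Type~A, with $v_1, v_2 \in U_1 \cap U_2 \cap U_3$ and each $v_{g_j}$ attached as a descendant of $v_{t_j}$; or Type~B, with $v_{g_1}, v_{g_2}, v_{g_3} \in U_1 \cap U_2 \cap U_3$ and $v_j$ attached as a descendant of $v_{t_j}$ for $j \in \{1,2\}$. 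Set $S := \{v \in V(G) : v \text{ is of Type~A}\}$. For the vertex-cover property, suppose $uv \in E(G)$ with $u, v \notin S$: then $u_1, u_2, v_1, v_2$ all sit below the tree roots, while the four edges $u_i v_j \in E(G')$ force them into pairwise-comparable chains in $\leq_T$; a case analysis on which subtree of $T$ this $K_{2,2}$ inhabits then exceeds the budget $k'$, a contradiction.

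For the size bound $|S| \leq k$, I use a two-sided estimate on $W_1$. By the Type~A/B dichotomy, each $v \in S$ contributes $w(v_1) + w(v_2) = 4$ to $W_1$ and each $v \notin S$ contributes $w(v_{g_1}) + w(v_{g_2}) + w(v_{g_3}) = 3$, giving $W_1 \geq 3n + |S|$. For the upper bound: if the $T_1$-sub-decomposition is non-natural I already have $W_1 \leq 3n + k - 1$; otherwise the chain from $h_1$ to every $v_{t_1}$ has weight exactly $\alpha$, and since $k < n-1$ some $v_0 \notin S$ exists, making $v_{0,1}$ (of weight~$2$) a descendant of $v_{0, t_1}$, so the root-to-leaf path of $T$ through $v_{0,1}$ has weight at least $W_1 + \alpha + 2$ and therefore $W_1 \leq 3n + k$. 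Either way, $|S| \leq k$. The main obstacle is making the Type~A/B canonicalisation rigorous: mixed placements of $M_v$ must be eliminated by exchange arguments that preserve the weighted depth while respecting the ancestor-descendant constraints imposed by the edges linking $M_v$ to the gadgets $M_u$ of $G$-neighbours~$u$ of~$v$, which is where the bounded-degree assumption on $G$ becomes essential.
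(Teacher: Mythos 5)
There is a genuine gap, and it sits exactly where you flag it: the Type~A/B canonicalisation is not a technical detail to be cleaned up later, it is the entire difficulty of the lemma, and as stated it is not achievable by local exchanges. Nothing prevents an optimal decomposition from placing exactly one of $v_1,v_2$ above the three tree-minima and the other below; such a ``mixed'' gadget is neither Type~A nor Type~B, and exchanging $v_2$ upward (or $v_1$ downward) changes the ancestor relations of $v_2$ with the copies $u_1,u_2$ of every $G$-neighbour $u$ of $v$, so the exchange can destroy the treedepth-decomposition property or increase the depth of other root-to-leaf paths. Your counting ($4$ per Type~A vertex, $3$ per Type~B vertex) silently assumes the dichotomy is exhaustive; without it the lower bound $W_1\geq 3n+|S|$ does not follow. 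A second, prior gap: your sets $U_1\cap U_2\cap U_3$ only behave like a single chain if the three $\leq_T$-minima $h_1,h_2,h_3$ are pairwise incomparable, which you never establish (the paper proves it by combining \cref{lem:tree-dec-of-trees} with the weight $\beta$ of any tree vertex to exceed $k'$). Without incomparability, e.g.\ $h_1<_T h_2$ is conceivable a priori, and then $U_2$ contains tree vertices of weight $\geq\beta$ and your gadget-accounting breaks down.

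The paper avoids canonicalisation altogether. It takes $b$ to be the deepest of the pairwise least common ancestors of $h_1,h_2,h_3$ and lets $S'$ be the chain from $b$ to the root; then $S'$ separates $T_1,T_2,T_3$ pairwise in $G'$. The charging argument is then about which gadget vertices are \emph{forced} into $S'$ by separation: if $v_1\in S'$ but $v_2\notin S'$, then $v_{g_1}$ and $v_{g_3}$ must also lie in $S'$ (else $v_2$ reconnects $T_2$ to $T_1$ or $T_3$), so weight $4$ is still charged to $v$; the symmetric case and the case $v_1,v_2\in S'$ likewise give $4$, and $v\notin S$ gives $3$. This handles precisely the mixed placements your dichotomy omits, and it also makes the vertex-cover step immediate: an uncovered edge $uv$ yields a $T_1$--$T_2$ path through $u_1 v_2$ avoiding $S'$. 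Your weight bounds $W_i+\alpha\leq k'$ and the ``not all weight-$2$ vertices can be above the trees'' refinement do match the paper's use of \cref{lem:tree-dec-of-trees}, but to complete the proof you would need to replace the canonicalisation by a separation-based charging argument of this kind.
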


The above lemma concludes the proof of \cref{thm:td-degree}. But before we prove it, we need to make some observations. 
Let $H$ be a subgraph of~$G'$ induced by all the vertices of $T_1$ and any subset of the vertices of $\{v_1 \mid v\in G\} \,\cup\, \{v_{g_1} \mid v\in G\}$. We call the vertices of $H$ not in $T_1$ the {\em dangling vertices}.
\begin{lemma}\label[lemma]{lem:tree-dec-of-trees}
	The weighted treedepth of $H$ is either $\alpha+w$ if there are no edges between two dangling vertices (where $w$ is the maximal weight of the dangling vertices), or at least~$\alpha+3$ if such an edge exists. Furthermore, every root-to-leaf path in an optimal treedepth decomposition has weight at least~$\alpha$.
\end{lemma}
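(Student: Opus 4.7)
The plan is to prove the lemma by first pinning down the weighted treedepth of the tree $T_1$ itself, and then performing a case analysis on how the dangling vertices attach. The key tool throughout is \cref{obs:conn-decomposition} applied to the connected subgraph $T_1$ of $H$. I would first show $\wtd(T_1)=\alpha$: the upper bound is immediate by taking $T_1$ as its own decomposition. For the lower bound, pick two leaves $u_{t_1},v_{t_1}$ of $T_1$ whose least common ancestor in $T_1$ is the root; their connecting path in $T_1$ is a connected subgraph of total weight $2\alpha-\beta$. By \cref{obs:conn-decomposition}, in any decomposition $D$ this path has a unique $\leq_D$-minimum $m$ and splits in $D$ into two chains meeting at $m$ whose combined weight is $2\alpha-\beta+w(m)\geq 2\alpha$ (using $w(m)\ge\beta$). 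Hence the heavier chain has weight $\geq\alpha$ and embeds into a root-to-leaf path of $D$, giving $\wtd(T_1)\geq\alpha$; the same argument carried out inside any decomposition of $H$ yields a root-to-leaf path of weight at least $\alpha$.

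For the upper bound in the case without dangling-dangling edges, I would construct the decomposition explicitly: keep $T_1$ in its native orientation and attach each dangling vertex $x$ as a child of its unique $T_1$-neighbour $v_{t_1}$. Because $v_1v_{g_1}\in E(G')$ always holds, at most one of $\{v_1,v_{g_1}\}$ is dangling per $v$ in this case, so each $v_{t_1}$ receives at most one dangling child. Every root-to-leaf path of this decomposition has weight either exactly $\alpha$ (at a bare $v_{t_1}$) or at most $\alpha+w$ (at a dangling $x$), giving $\wtd(H)\leq\alpha+w$.

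For the matching lower bounds I would analyse an arbitrary decomposition $D$, following a heaviest dangling vertex (for the edge-free case) or the dangling-dangling edge (for the other case). In the first case, a heaviest dangling vertex $d^*$ is comparable in $D$ to its unique $T_1$-neighbour $v_{t_1}$; a case split on their $\leq_D$-order, combined with a variant of the $\alpha$-chain argument chosen so that the chain passes through $v_{t_1}$, produces a root-to-leaf path of weight at least $\alpha+w$. In the second case, the dangling-dangling edge together with the $H$-edges attaching the involved dangling vertices to $T_1$ forces either the triangle $\{v_1,v_{g_1},v_{t_1}\}$ (if the edge is $v_1v_{g_1}$) to sit on one chain of $D$, or one of $\{v_1,u_1\}$ to lie above the $D$-least common ancestor of $v_{t_1}$ and $u_{t_1}$ (if the edge is $v_1u_1$ for an edge $uv\in E(G)$); either way at least three extra units of weight stack onto an $\alpha$-chain, yielding depth at least $\alpha+3$. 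The subcase with an edge $v_1u_1$ is the most delicate and is where I expect the bulk of the technical work.

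The ``every root-to-leaf path has weight $\geq\alpha$'' claim is the subtlest part and the main obstacle I anticipate. I would deduce it from the structural property that in any optimal $D$, the ancestors of every leaf of $D$ must include all $T_1$-vertices on some root-to-leaf path of $T_1$; once this is shown, the root-to-leaf weight in $D$ is at least $\alpha$ by construction. I expect to establish this property by ruling out, via local-modification arguments, alternative decompositions that would split the canonical $T_1$-chain across different branches of $D$ without paying the full $\alpha$-weight on every such branch.
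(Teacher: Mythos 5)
There is a genuine gap at the foundation of your argument: the claim that the leaf-to-leaf path through the root of $T_1$ ``splits in $D$ into two chains meeting at $m$'' is false, and the $\alpha$ lower bound you build on it does not survive. In a treedepth decomposition of a path, the path's vertices need not lie on only two root-to-leaf chains; they can be scattered across many branches, and this is precisely why an $\ell$-vertex path has treedepth only $\Theta(\log \ell)$ rather than $\Theta(\ell)$. Concretely, the path you select has $2\log(n)+1$ vertices of weight roughly $\beta$ each, so as a standalone graph its weighted treedepth is about $\beta\log\log n$, far below $\alpha\approx\beta\log n$; hence no decomposition-independent accounting of that path alone can certify a root-to-leaf chain of weight $\alpha$. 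The bound $\wtd(T_1)\ge\alpha$ is a fact about the \emph{complete binary tree} structure, not about any single path inside it, and your two lower-bound cases (the $\alpha+w$ bound via the heaviest dangling vertex and the $\alpha+3$ bound via a dangling--dangling edge) both inherit this broken ``$\alpha$-chain'' step. The same problem undermines the ``every root-to-leaf path has weight $\ge\alpha$'' claim, for which you only offer an unproved structural property plus unspecified local modifications.

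The paper instead argues by induction on the height $h$ of complete subtrees $T_h$ of $T_1$ (with their dangling vertices), setting $\alpha_h=\sum_{i=0}^{h}(\beta+3(\log(n)-i))$: whichever vertex is $\le_D$-minimal in the decomposition of $T_h$, one of the two height-$(h-1)$ subtrees avoids it entirely, so by \cref{obs:induced-subgraph} it contributes $\alpha_{h-1}$ (or $\alpha_{h-1}+3$) below that vertex, and adding the vertex's weight (at least $\beta+3(\log(n)-h)$, with a $+3$ surplus if it lies strictly below the root of $T_h$, and a $+3$ from $w(a)+w(b)\ge 3$ if it is a dangling endpoint of a dangling--dangling edge) yields $\alpha_h$ respectively $\alpha_h+3$. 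If you want to salvage your write-up, you should replace the path argument by this recursive ``whatever you delete first, a full subtree of height one less survives'' induction; your explicit upper-bound construction (attaching each dangling vertex below its unique $T_1$-neighbour, using that at most one of $v_1,v_{g_1}$ dangles per $v$ in the edge-free case) is correct and can be kept.
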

\begin{proof}
	For a vertex $v$, we denote by $h(v)$ the distance of $v$ to a leaf in $T_1$ (which is invariant for all vertices of the same level of $T_1$ as $T_1$ is a complete binary tree). 
	The proof is done by induction on $h$ looking at every subtree $T_h$ rooted on a vertex $v$ with $h(v) = h$. We define $\alpha_h$ as $\sum\limits_{i=0}^{h}(\beta+3(\log(n)-i))$ and show that the weight depth of $T_h$ is at least $\alpha_h + 3$.
	
	If $h=0$, then $T_h$ is composed of one leaf of $T_1$ and up to two dangling vertices. If the dangling vertices are connected via an edge,\footnote{Note that there cannot be adjacent dangling vertices of weight 1.} then the three vertices must be in one path of weight $\alpha_0+1+2$. The case without edges is straightforward.

	By induction, let $T$ be an optimal treedepth decomposition for $T_h$. Assume the existence of an edge between two dangling vertices (and therefore the existence of a dangling vertex of weight~$2$).
	Assume first that the root of the optimal treedepth decomposition $T$ is the root of $T_h$. Either the existing edges keep the graph connected (and further deletions must occur before reducing to~$T_{h-1}$), or there is one subtree of~$T_h$ that, by induction, has weighted treedepth at least $\alpha_{h-1}+3$. In both case, adding the weight of the root (i.e.~$\beta+3(\log(n)-h)$) makes the weight of $T$ at least $\alpha_{h-1}+\beta +3(\log(n)-h) +3 =\alpha_h+3$.

	Assume now that the root is a vertex $a$ of $T_h$ (that is not a dangling vertex).
	Note that the weight of $a$ is at least $\beta+3(\log(n) - h+1)$.
	Then by induction, there is a subtree $T_{h-1}$ of $T_h$ not containing $a$, which has weighted treedepth at least $\alpha_{h-1}$. Adding the weight of $a$ makes at least $\alpha_{h-1} + \beta +3(\log(n) - h+1) +3$, hence, at least $\alpha_h+3$.
	
	Assume finally that the root of $T$ is a dangling vertex $a$. We can assume that $a$ is not of degree $1$, otherwise switching it with its only neighbor does not increase the weight of $T$, and we go to one of the previous cases. So $a$ is adjacent to a dangling vertex $b$ with $w(a)+w(b)\ge 3$. Even if removing $a$ removes all edges between dangling vertices, the weighted depth of $T-a$ must be at least $\alpha_h+w(b)$, so the weighted treedepth of $T$ is at least $\alpha_h+w(a)+w(b)$, hence at least $\alpha_h+3$.
\end{proof}

Of course, \cref{lem:tree-dec-of-trees} also applies to $T_2$. The case of $T_3$ is simpler, the dangling vertices are independent and each of weight $1$. For $T_3$, we only use that its weighted treedepth is at least $\alpha$.
We now go to the proof of \cref{lem:small-td-small-vs}.

\begin{proof}
	Let $T$ be a treedepth decomposition of $G'$ with weighted depth~$k'$.
	Let~$\le_T$ be the tree order of $T$. 
	By \cref{obs:conn-decomposition}, the trees $T_1$, $T_2$, and~$T_3$ have unique \mbox{$\leq_T$-minimal} elements $t_1, t_2, t_3$.
	\begin{claim}
		$t_1$, $t_2$, and $t_3$ are pairwise $\le_T$-incomparable.
	\end{claim}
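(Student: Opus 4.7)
\begin{claimproof}
The plan is to proceed by contradiction: suppose two of $t_1,t_2,t_3$ are $\leq_T$-comparable. By the symmetric roles of $T_1, T_2, T_3$ we may assume $t_1 \leq_T t_2$, and the aim is to show that the weighted depth of~$T$ would then strictly exceed the budget~$k'$.

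First, I would apply \cref{obs:induced-subgraph} to the connected subgraph $T_2 \subseteq G'$, whose unique $\leq_T$-minimum is~$t_2$: the subtree of~$T$ rooted at~$t_2$ has weighted depth at least $\wtd(T_2)$. A direct application of \cref{lem:tree-dec-of-trees} to~$T_2$ with no dangling vertices (using the ``furthermore'' clause stating that every root-to-leaf path in an optimal decomposition has weight at least $\alpha$) yields $\wtd(T_2) \geq \alpha$. Hence there is a path in~$T$ from~$t_2$ down to a leaf of weight at least~$\alpha$.

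Since $t_2$ is a proper descendant of $t_1$ in~$T$, prepending $t_1$ (together with any intermediate vertices on the path from $t_1$ to $t_2$ in $T$) extends this to a root-to-leaf path of weight at least $w(t_1)+\alpha$ inside the subtree of~$T$ rooted at~$t_1$. Because $t_1 \in V(T_1)$ and every vertex of $T_1$ has weight at least $\beta = 3n+k+3$ (the root's weight, with weights only increasing toward the leaves), the weighted depth of~$T$ is at least $\beta+\alpha = 3n+k+3+\alpha$, which strictly exceeds $k' = 3n+k+2+\alpha$, contradicting the hypothesis.

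I do not anticipate a serious obstacle here: the proof is essentially a single invocation of \cref{obs:induced-subgraph} combined with the lower bound of \cref{lem:tree-dec-of-trees}. The step worth the most care is the arithmetic check on the budget: $k'$ exceeds $3n+k+\alpha$ by only~$2$, strictly less than the minimum tree-vertex weight~$\beta$, so even the cheapest nesting of one~$T_j$ below any vertex of a different~$T_i$ breaks the depth budget. This tight slack is precisely the design feature that makes the claim go through, and it is where I would verify the constants most carefully. \claimqed
\end{claimproof}
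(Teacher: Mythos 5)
Your proof is correct and follows essentially the same route as the paper: assume two of the $t_i$ are comparable, lower-bound the subtree rooted at the deeper one by $\alpha$ via \cref{obs:induced-subgraph} and \cref{lem:tree-dec-of-trees}, add the weight $\geq\beta$ of the shallower one, and observe that $\alpha+\beta>k'$. The arithmetic check you flag is exactly the one the paper performs.
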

	\begin{claimproof}
		Assume towards a contradiction and without loss of generality that $t_1<_T t_2$, which also implies that $t_1<_T y$ for every $y\in V(T_2)$. 
		By \cref{lem:tree-dec-of-trees}, $T_2$ has weighted treedepth at least $\alpha$. 
		By \cref{obs:induced-subgraph}, the subtree of $T$ rooted at $t_2$ has weighted depth at least $\alpha$.
		Every element of $T_1$, and in particular $t_1$, has weight at least $\beta$. 
		As $t_1<_T t_2$, $T$ has weighted depth at least $\alpha+\beta$. 
		However, note that $\alpha+\beta = \alpha +3n+k+3 > \alpha + 3n +k+2 = k'$, which contradicts that the weighted \linebreak depth of $T$ is~$k'$.
	\end{claimproof}
	
	The least common ancestors of $(t_1,t_2)$ and $(t_2,t_3)$ must be $\le_T$-comparable (they are both ancestors of $t_2$). Let $a$ be the $\le_T$-maximal of the two. Again $a$ must be $\le_T$-comparable with the least common ancestor of $(t_1,t_3)$, since either both are ancestors of $t_1$ or both of $t_3$. Let $b$ be the $\le_T$-maximal of the two. Note that no vertex of any $T_i$ is $\le_T$-smaller than $b$.
	
	Let $S'$ be the set of vertices on the path from $b$ to the root of $T$. Observe that in $G' - S'$, the trees~$T_1$,~$T_2$, and~$T_3$ are pairwise in different connected components. This is due to the fact that any path from a vertex of $T_i$ to a vertex of $T_j$ must go through a vertex $\le_T$-smaller than the least common ancestor of $t_i$ and $t_j$, and hence $\le_T$-smaller than $b$.
	
	Now consider the following set $S = \{v \in V(G) \mid v_1 \in S' \textrm{ or } v_2 \in S'\}$. We claim that $S$ is a vertex cover of size at most $k$.
	\begin{claim}
		$S$ is a vertex cover of size at most $k$.
	\end{claim}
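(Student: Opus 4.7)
The plan is to verify the two assertions of the claim separately, bounding both the weight of $S'$ and the depth of the relevant part of $T$.

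\textbf{Vertex cover.} Suppose toward a contradiction that some edge $uv\in E(G)$ is uncovered; then $u_1,u_2,v_1,v_2\notin S'$. Since $t_1,t_2,t_3$ are pairwise $\le_T$-incomparable, the vertex $b$ cannot lie in any $T_i$, and consequently no $T_i$-vertex appears on $S'$. The reduction places the edges $u_1v_2$, $u_1u_{t_1}$, and $v_2v_{t_2}$ in $G'$, so the path $u_{t_1}$--$u_1$--$v_2$--$v_{t_2}$ lies entirely in $G'-S'$ and connects $T_1$ with $T_2$, contradicting their separation.

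\textbf{Lower bound on $|S'|_w$.} The same observation also gives $S'\subseteq\{v_1,v_2,v_{g_i}\mid v\in V(G),\ i=1,2,3\}$. For each $v\in V(G)$, I would bound the contribution of those five vertices. If $v\in S$ with $v_1\in S'$ and $v_2\notin S'$, then $v_2$ lies in the $T_2$-component of $G'-S'$, and the same separation argument forces $v_{g_1},v_{g_3}\in S'$ (otherwise $v_2$--$v_{g_1}$--$v_{t_1}$ or $v_2$--$v_{g_3}$--$v_{t_3}$ merges $T_2$ with $T_1$ or $T_3$), yielding contribution $w(v_1)+w(v_{g_1})+w(v_{g_3})=4$; when $v_1,v_2\in S'$ the contribution is $4$ trivially. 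If $v\notin S$, analogous arguments force all three $v_{g_i}\in S'$, contributing $3$. Summing, $|S'|_w\ge 4|S|+3(n-|S|)=3n+|S|$.

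\textbf{Upper bound via depth.} Assume $|S|<n$ and fix any $v\notin S$. Since $v_2\notin S'$ must be $\le_T$-comparable with $v_{t_2}\in V(T_2)$, the vertex $v_2$ must be placed in the subtree of $T$ rooted at the child $c_2$ of $b$ on the branch towards $t_2$. Let $h$ be the $\le_T$-minimum of the connected subgraph $H\coloneqq V(T_2)\cup\{v_2\}$; then $h$ also lies in that subtree. Applying \cref{lem:tree-dec-of-trees} to $H$---which has a single dangling vertex of weight~$2$ and no dangling-dangling edges---gives $\wtd(H)=\alpha+2$, and by \cref{obs:induced-subgraph} the subtree of $T$ rooted at $h$ has weighted depth at least $\alpha+2$. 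Hence the root-to-leaf $T$-path passing through $h$ has weight at least $|S'|_w+\alpha+2$, and this value is at most $k'=3n+k+\alpha+2$, so $|S'|_w\leq 3n+k$; combined with the lower bound, $|S|\leq k$. The remaining case $|S|=n$ is ruled out under our hypothesis $k<n-1$: then $|S'|_w\ge 4n$ and either some $v_{g_i}$ lies below $b$ (forcing the subtree-depth to be at least $\alpha+1$) or all $v_{g_i}$ lie on $S'$ (forcing $|S'|_w\ge 7n$), and each scenario produces a root-to-leaf weight exceeding~$k'$.

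The main technical obstacle is closing the $+2$ gap between the two bounds on $|S'|_w$: a naive depth estimate of $\alpha$ would only deliver $|S|\leq k+2$, and the essential step is to locate the weight-$2$ copy $v_2$ of some $v\notin S$ as a dangling attachment on $T_2$, which via \cref{lem:tree-dec-of-trees} supplies the missing $2$ in the depth.
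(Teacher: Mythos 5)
Your proposal follows the same strategy as the paper's proof: the vertex-cover property via the separation of $T_1$ and $T_2$ in $G'-S'$, the lower bound $w(S')\ge 4|S|+3(n-|S|)=3n+|S|$ via the identical charging of each $v$, and the upper bound $w(S')\le 3n+k$ obtained by feeding a weight-$2$ dangling vertex into \cref{lem:tree-dec-of-trees} to gain the crucial $+2$ on top of $\alpha$. Your case split ($|S|<n$ versus $|S|=n$) replaces the paper's split on whether $S'$ contains every weight-$2$ vertex, but plays the same role and closes in the same way using $k<n-1$.

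One step needs repair. You assert that $v_2$ "must be placed in the subtree of $T$ rooted at the child $c_2$ of $b$ on the branch towards $t_2$," which presupposes $b\le_T t_2$. But $b$ is defined as the deepest of the three pairwise least common ancestors, and if $b=\lca(t_1,t_3)$ it need not be an ancestor of $t_2$ at all; in that case the root-to-leaf path of $T$ through $h$ contains only the ancestors of $a$, not the portion of $S'$ strictly below $a$, and the inequality ``depth $\ge w(S')+\alpha+2$'' does not follow as written. The fix is already available in your setup: $b$ is an ancestor of at least two of $t_1,t_2,t_3$, hence of at least one of $t_1,t_2$, and since $v\notin S$ both $v_1$ and $v_2$ are weight-$2$ dangling vertices (on $T_1$ and $T_2$ respectively); run your argument on whichever of $T_1,T_2$ has its minimal element below $b$, for which $b<_T h$ does follow because $S'$ is upward closed. (The paper's own write-up elides this choice as well, so this is a sharpening rather than a departure from its argument.)
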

	\begin{claimproof}
		First, we show that $|S| \leq k$.
		Assume towards a contradiction that $S'$ contains every vertex of $G'$ with weight $2$. Then the weighted depth of $T$ is at least $4n+\alpha+1$ (where the $+1$ is due to the remaining dangling vertices) and since $k<n-1$ we have that $4n+\alpha+1 >3n+k+\alpha +2$. So $S'$ does not contain every vertex of weight $2$ and so in $G'-S'$ there are dangling vertices of weight $2$. Hence, with \cref{lem:tree-dec-of-trees} and since the weighted depth of $T$ is at most $3n+k+\alpha+2$, we obtain that $w(S')\le 3n+k$.

		Now if $v\in S$ then either:
		\begin{itemize}
			\item $v_1,v_2 \in S'$, or
			\item $v_1\not\in S'$ and then $v_2$ in $S'$ by definition, but also $v_{g_2},v_{g_3} \in S'$ (since we must separate $T_1$ from $T_2$ and $T_3$), or
			\item $v_2\not\in S'$ and then $v_1\in S'$ but also $v_{g_1},v_{g_3} \in S'$ (since we must separate $T_2$ from $T_1$ and $T_3$).
		\end{itemize} 		
		In all of these cases, a weight of $4$ can be charged to $v$. Further, if $v\notin S$ then $v_{g_1},v_{g_2},v_{g_3}\in S'$. Thus, $w(S') = 4|S| + 3(n-|S|)=3n+|S|$. It follows that~$|S| \leq k$.
		
		For showing that $S$ is a vertex cover, assume towards a contraction that there exists an edge $uv$ with both $u$ and $v$ not in $S$. Then in particular, none of $u_1,u_2,v_1,v_2$ is in $S'$. However, then $S'$ would not separate $T_1$ and $T_2$.
	\end{claimproof}
	This concludes the proof of \cref{lem:small-td-small-vs}.
\end{proof}

\section{Conclusion}

We studied weighted treedepth, the weighted version of the well-known treedepth parameter.
Our results show that \WTD can be solved in polynomial time on certain simple graphs such as paths and 1-subdivided stars.
While these results may initially seem modest, it is important to note that computing \WTD is \textsf{NP}-hard on trees~\cite{DereniowskiN06}.
In this paper, we provide further insight by demonstrating that \WTD is also \textsf{NP}-hard on graphs with a maximum degree of~10, thus completing a significant part of the complexity landscape.

The main open problem is now the question if weighted treedepth is fixed-parameter tractable parameterized by (unweighted) treedepth.
Additionally, the proof of the conjectures that treedepth is \textsf{NP}-hard on planar graphs and graphs of bounded degree remains an open challenge.
Another interesting question is to what factors weighted treedepth can be approximated.

\bibliographystyle{fundam}
\bibliography{ref.bib}

@inproceedings{reidl2014faster,
  title={A faster parameterized algorithm for treedepth},
  author={Reidl, Felix and Rossmanith, Peter and Villaamil, Fernando S{\'a}nchez and Sikdar, Somnath},
  booktitle={International Colloquium on Automata, Languages, and Programming},
  pages={931--942},
  year={2014},
  organization={Springer}
}

@inproceedings{SchirrmacherSV25,
  author       = {Nicole Schirrmacher and
                  Sebastian Siebertz and
                  Alexandre Vigny},
  title        = {Elimination Distance to Dominated Clusters},
  booktitle    = {50th International Symposium on Mathematical Foundations of Computer
                  Science, {MFCS} 2025},
  series       = {LIPIcs},
  volume       = {345},
  pages        = {90:1--90:20},
  year         = {2025}
}

@inproceedings{NadaraPS22,
  author       = {Wojciech Nadara and
                  Michal Pilipczuk and
                  Marcin Smulewicz},
  title        = {Computing Treedepth in Polynomial Space and Linear {FPT} Time},
  booktitle    = {30th Annual European Symposium on Algorithms, {ESA} 2022},
  series       = {LIPIcs},
  volume       = {244},
  pages        = {79:1--79:14},
  year         = {2022}
}

@inproceedings{FominG0021,
  author       = {Fedor V. Fomin and
                  Petr A. Golovach and
                  Tanmay Inamdar and
                  Saket Saurabh},
  title        = {{ETH} Tight Algorithms for Geometric Intersection Graphs: Now in Polynomial
                  Space},
  booktitle    = {41st {IARCS} Annual Conference on Foundations of Software Technology
                  and Theoretical Computer Science, {FSTTCS} 2021},
  series       = {LIPIcs},
  volume       = {213},
  pages        = {21:1--21:16},
  year         = {2021}
}

@article{jaffke2025parameterized,
  title={A Parameterized Complexity Analysis of Bounded Height Depth-first Search Trees},
  author={Jaffke, Lars and de Lima, Paloma T and Nadara, Wojciech and Sam, Emmanuel},
  journal={arXiv preprint arXiv:2502.16723},
  year={2025}
}

@article{pothen1988complexity,
  title={The complexity of optimal elimination trees},
  author={Pothen, Alex},
  journal={Technical Report},
  year={1988}
}

@article{bodlaender1998rankings,
  title={Rankings of graphs},
  author={Bodlaender, Hans L and Deogun, Jitender S and Jansen, Klaus and Kloks, Ton and Kratsch, Dieter and M{\"u}ller, Haiko and Tuza, Zsolt},
  journal={SIAM Journal on Discrete Mathematics},
  volume={11},
  number={1},
  pages={168--181},
  year={1998},
  publisher={SIAM}
}

@book{nesetril2014sparsity,
  title={Sparsity: graphs, structures, and algorithms},
  author={Nesetril, Jaroslav and De Mendez, Patrice Ossona},
  year={2014},
  publisher={Springer Publishing Company, Incorporated}
}

@article{katchalski1995ordered,
  title={Ordered colourings},
  author={Katchalski, Meir and McCuaig, William and Seager, Suzanne},
  journal={Discrete Mathematics},
  volume={142},
  number={1-3},
  pages={141--154},
  year={1995},
  publisher={Elsevier}
}

@book{bertele1972nonserial,
  title={Nonserial dynamic programming},
  author={Bertele, Umberto and Brioschi, Francesco},
  year={1972},
  publisher={Academic Press, Inc.}
}

@article{halin1976s,
  title={S-functions for graphs},
  author={Halin, Rudolf},
  journal={Journal of geometry},
  volume={8},
  pages={171--186},
  year={1976},
  publisher={Springer}
}

@article{robertson1984graph,
  title={Graph minors. {III}. Planar tree-width},
  author={Robertson, Neil and Seymour, Paul D},
  journal={Journal of Combinatorial Theory, Series B},
  volume={36},
  number={1},
  pages={49--64},
  year={1984},
  publisher={Elsevier}
}

@article{furer2017space,
  title={Space saving by dynamic algebraization based on tree-depth},
  author={F{\"u}rer, Martin and Yu, Huiwen},
  journal={Theory of Computing Systems},
  volume={61},
  number={2},
  pages={283--304},
  year={2017},
  publisher={Springer}
}

@inproceedings{HegerfeldK20,
  author       = {Falko Hegerfeld and
                  Stefan Kratsch},
  title        = {Solving Connectivity Problems Parameterized by Treedepth in Single-Exponential
                  Time and Polynomial Space},
  booktitle    = {37th International Symposium on Theoretical Aspects of Computer Science,
                  {STACS} 2020},
  series       = {LIPIcs},
  volume       = {154},
  pages        = {29:1--29:16},
  year         = {2020}
}

@article{nederlof2023hamiltonian,
  title={Hamiltonian cycle parameterized by treedepth in single exponential time and polynomial space},
  author={Nederlof, Jesper and Pilipczuk, Micha{\l} and Swennenhuis, C{\'e}line MF and W{\k{e}}grzycki, Karol},
  journal={SIAM Journal on Discrete Mathematics},
  volume={37},
  number={3},
  pages={1566--1586},
  year={2023},
  publisher={SIAM}
}

@article{pilipczuk2021polynomial,
  title={Polynomial bounds for centered colorings on proper minor-closed graph classes},
  author={Pilipczuk, Micha{\l} and Siebertz, Sebastian},
  journal={Journal of Combinatorial Theory, Series B},
  volume={151},
  pages={111--147},
  year={2021},
  publisher={Elsevier}
}

@article{pilipczuk2018space,
  title={On space efficiency of algorithms working on structural decompositions of graphs},
  author={Pilipczuk, Micha{\l} and Wrochna, Marcin},
  journal={ACM Transactions on Computation Theory (TOCT)},
  volume={9},
  number={4},
  pages={1--36},
  year={2018},
  publisher={ACM New York, NY, USA}
}

@article{dvovrak2018parameterized,
  title={Parameterized complexity of length-bounded cuts and multicuts},
  author={Dvo{\v{r}}{\'a}k, Pavel and Knop, Du{\v{s}}an},
  journal={Algorithmica},
  volume={80},
  number={12},
  pages={3597--3617},
  year={2018},
  publisher={Springer}
}

@article{arnborg1987complexity,
  title={Complexity of finding embeddings in ak-tree},
  author={Arnborg, Stefan and Corneil, Derek G and Proskurowski, Andrzej},
  journal={SIAM Journal on Algebraic Discrete Methods},
  volume={8},
  number={2},
  pages={277--284},
  year={1987},
  publisher={SIAM}
}

@inproceedings{bonnet2025treewidth,
  title={Treewidth inapproximability and tight eth lower bound},
  author={Bonnet, {\'E}douard},
  booktitle={Proceedings of the 57th Annual ACM Symposium on Theory of Computing},
  pages={2130--2135},
  year={2025}
}

@article{seymour1994call,
  title={Call routing and the ratcatcher},
  author={Seymour, Paul D. and Thomas, Robin},
  journal={Combinatorica},
  volume={14},
  pages={217--241},
  year={1994},
  publisher={Springer}
}

@book{cygan2015parameterized,
  title={Parameterized algorithms},
  author={Cygan, Marek and Fomin, Fedor V and Kowalik, {\L}ukasz and Lokshtanov, Daniel and Marx, D{\'a}niel and Pilipczuk, Marcin and Pilipczuk, Micha{\l} and Saurabh, Saket},
  volume={5},
  number={4},
  year={2015},
  publisher={Springer}
}

@article{courcelle1990monadic,
  title={The monadic second-order logic of graphs. I. Recognizable sets of finite graphs},
  author={Courcelle, Bruno},
  journal={Information and computation},
  volume={85},
  number={1},
  pages={12--75},
  year={1990},
  publisher={Elsevier}
}

@article{BodlaenderT97,
  author       = {Hans L. Bodlaender and
                  Dimitrios M. Thilikos},
  title        = {Treewidth for Graphs with Small Chordality},
  journal      = {Discret. Appl. Math.},
  volume       = {79},
  number       = {1-3},
  pages        = {45--61},
  year         = {1997},
  url          = {https://doi.org/10.1016/S0166-218X(97)00031-0},
  doi          = {10.1016/S0166-218X(97)00031-0},
  timestamp    = {Thu, 11 Feb 2021 23:25:25 +0100},
  biburl       = {https://dblp.org/rec/journals/dam/BodlaenderT97.bib},
  bibsource    = {dblp computer science bibliography, https://dblp.org}
}

@inproceedings{BodlaenderBJKLM23,
  author       = {Hans L. Bodlaender and
                  {\'{E}}douard Bonnet and
                  Lars Jaffke and
                  Dusan Knop and
                  Paloma T. Lima and
                  Martin Milanic and
                  Sebastian Ordyniak and
                  Sukanya Pandey and
                  Ondrej Such{\'{y}}},
  title        = {Treewidth Is {NP}-Complete on Cubic Graphs},
  booktitle    = {18th International Symposium on Parameterized and Exact Computation,
                  {IPEC} 2023},
  series       = {LIPIcs},
  volume       = {285},
  pages        = {7:1--7:13},
  year         = {2023},
  url          = {https://doi.org/10.4230/LIPIcs.IPEC.2023.7},
  doi          = {10.4230/LIPICS.IPEC.2023.7},
  timestamp    = {Wed, 07 May 2025 17:24:37 +0200},
  biburl       = {https://dblp.org/rec/conf/iwpec/BodlaenderBJKLM23.bib},
  bibsource    = {dblp computer science bibliography, https://dblp.org}
}

@article{DereniowskiN06,
  author       = {Dariusz Dereniowski and
                  Adam Nadolski},
  title        = {Vertex rankings of chordal graphs and weighted trees},
  journal      = {Inf. Process. Lett.},
  volume       = {98},
  number       = {3},
  pages        = {96--100},
  year         = {2006},
  url          = {https://doi.org/10.1016/j.ipl.2005.12.006},
  doi          = {10.1016/J.IPL.2005.12.006},
  timestamp    = {Fri, 30 Nov 2018 13:29:07 +0100},
  biburl       = {https://dblp.org/rec/journals/ipl/DereniowskiN06.bib},
  bibsource    = {dblp computer science bibliography, https://dblp.org}
}

@inproceedings{deogun1994vertex,
  title={On vertex ranking for permutation and other graphs},
  author={Deogun, Jitender S and Kloks, Ton and Kratsch, Dieter and M{\"u}ller, Haiko},
  booktitle={STACS 94: 11th Annual Symposium on Theoretical Aspects of Computer Science},
  pages={747--758},
  year={1994},
  organization={Springer}
}

@article{GareyJS76,
  author       = {M. R. Garey and
                  David S. Johnson and
                  Larry J. Stockmeyer},
  title        = {Some Simplified NP-Complete Graph Problems},
  journal      = {Theor. Comput. Sci.},
  volume       = {1},
  number       = {3},
  pages        = {237--267},
  year         = {1976},
  url          = {https://doi.org/10.1016/0304-3975(76)90059-1},
  doi          = {10.1016/0304-3975(76)90059-1},
  timestamp    = {Mon, 26 May 2025 08:18:30 +0200},
  biburl       = {https://dblp.org/rec/journals/tcs/GareyJS76.bib},
  bibsource    = {dblp computer science bibliography, https://dblp.org}
}


\end{document}